\newcolumntype{+}{!{\vrule width 2pt}}
\newlength\savedwidth
\newcommand\thickhline{\noalign{\global\savedwidth\arrayrulewidth\global\arrayrulewidth 2pt}%
\hline
\noalign{\global\arrayrulewidth\savedwidth}}
\renewcommand{\@biblabel}[1]{\quad#1.}
\newtheorem{definition}{Definition}
\newtheorem{theorem}{Theorem}
\newtheorem{proposition}{Proposition}
\begin{document}
\vspace*{0.2in}

\begin{flushleft}
{\Large
\textbf\newline{Autocratic strategies in Cournot oligopoly game} 
}
\newline
\\
Masahiko Ueda\textsuperscript{1*},
Shoma Yagi\textsuperscript{2},
Genki Ichinose\textsuperscript{2\dag}
\\
\bigskip
\textbf{1} Graduate School of Sciences and Technology for Innovation, Yamaguchi University, Yamaguchi, Japan
\\
\textbf{2} Department of Mathematical and Systems Engineering, Shizuoka University, Hamamatsu, Japan
\\
\bigskip

%
%





* m.ueda@yamaguchi-u.ac.jp

\dag ichinose.genki@shizuoka.ac.jp

\end{flushleft}
\section*{Abstract}
An oligopoly is a market in which the price of goods is controlled by a few firms.
Cournot introduced the simplest game-theoretic model of oligopoly, where profit-maximizing behavior of each firm results in market failure.
Furthermore, when the Cournot oligopoly game is infinitely repeated, firms can tacitly collude to monopolize the market.
Such tacit collusion is realized by the same mechanism as direct reciprocity in the repeated prisoner's dilemma game, where mutual cooperation can be realized whereas defection is favorable for both prisoners in a one-shot game.
Recently, in the repeated prisoner's dilemma game, a class of strategies called zero-determinant strategies attracts much attention in the context of direct reciprocity.
Zero-determinant strategies are autocratic strategies which unilaterally control payoffs of players by enforcing linear relationships between payoffs.
There were many attempts to find zero-determinant strategies in other games and to extend them so as to apply them to broader situations.
In this paper, first, we show that zero-determinant strategies exist even in the repeated Cournot oligopoly game, and that they are quite different from those in the repeated prisoner's dilemma game.
Especially, we prove that a fair zero-determinant strategy exists, which is guaranteed to obtain the average payoff of the opponents.
Second, we numerically show that the fair zero-determinant strategy can be used to promote collusion when it is used against an adaptively learning player, whereas it cannot promote collusion when it is used against two adaptively learning players.
Our findings elucidate some negative impact of zero-determinant strategies in the oligopoly market.

\section*{Author summary}
Repeated games have been used to analyze the rational decision-making of multiple agents in a long-term interdependent relationship.
Recently, a class of autocratic strategies, called zero-determinant strategies, was discovered in repeated games, which unilaterally controls payoffs of players via enforcing linear relations between payoffs.
So far, properties of zero-determinant strategies in social dilemma situations have extensively been investigated, and it has been shown that some zero-determinant strategies promote cooperation.
Moreover, zero-determinant strategies have been found in several games.
However, it has not been known whether zero-determinant strategies exist in oligopoly games.
In this paper, we investigate zero-determinant strategies in the repeated Cournot oligopoly game, which is the simplest mathematical model of oligopoly.
We prove the existence of zero-determinant strategies which unilaterally enforce linear relations between the payoff of the player and the average payoff of the opponents.
Furthermore, we numerically show that some zero-determinant strategy can promote collusion in a duopoly case, although it cannot promote collusion in a triopoly case.
Our results imply that zero-determinant strategies can be used to promote cooperation between firms even in an oligopoly market.


\section{Introduction}
\label{sec:introduction}
Oligopoly is one of the simplest game-theoretic situations in economics, where the price of goods is controlled by a few firms.
Cournot introduced a simple model of oligopoly, where multiple firms produce the same goods and the good's price decreases as the total amount of the goods increases \cite{FudTir1991,OsbRub1994}.
Each firm wants to maximize its profit, but needs to choose its production while considering production of the other firms.
When all firms are rational, the Cournot-Nash equilibrium is realized, where production of each firm is the best response against production of the other firms.
Because total production in this equilibrium is smaller than that in perfect competition, oligopoly results in market failure.

When the Cournot oligopoly game is infinitely repeated, the situation becomes worse for consumers.
Since firms become taking total profits obtained in future into account, firms can tacitly collude to monopolize the market \cite{Gib1992}, even if the Cournot-Nash equilibrium production is the only rational behavior of each firm in a one-shot game.
This is an example of the folk theorem in repeated games.
This result is similar to direct reciprocity in the repeated prisoner's dilemma game \cite{Now2006}.
In the prisoner's dilemma game, two prisoners independently choose cooperation or defection.
Defection is the best action in terms of the payoff of each prisoner regardless of the action of the other prisoner, and mutual defection is realized as a result of rational behavior.
Because mutual cooperation improves the payoffs of both prisoners in mutual defection, this game describes a kind of social dilemmas.
However, when the game is infinitely repeated, mutual cooperation can be realized as a result of long-term perspectives, similarly to tacit collusion in the repeated Cournot oligopoly game.

In 2012, a novel class of strategies, called zero-determinant (ZD) strategies, was discovered in the repeated prisoner's dilemma game \cite{PreDys2012}.
Counterintuitively, ZD strategies unilaterally control payoffs of players by enforcing linear relations between payoffs.
ZD strategies can be regarded as generalization of the equalizer strategy \cite{BNS1997}, which unilaterally sets the payoff of the opponent.
ZD strategies also contain the extortionate ZD strategy, by which a player never obtains payoff lower than that of the opponent, and the generous ZD strategy \cite{StePlo2013}, by which a player never obtains payoff higher than that of the opponent but promotes cooperation.
So far, ZD strategies were discovered in the prisoner's dilemma game \cite{PreDys2012,StePlo2013,Aki2016}, the public goods game \cite{HWTN2014,PHRT2015}, the continuous donation game \cite{McAHau2016}, the asymmetric prisoner's dilemma game \cite{TahGho2020,KTZ2024}, two-player potential games \cite{Ued2022,Ued2025}, and symmetric games with no generalized rock-paper-scissors cycles \cite{Ued2022b,Ued2023}.
Particularly, ZD strategies in games with infinite action sets are called autocratic strategies \cite{McAHau2016}.
Furthermore, ZD strategies were extended to broader situations, such as games with a discount factor \cite{HTS2015,McAHau2016,IchMas2018,GovCao2020}, games with observation errors \cite{HRZ2015,MamIch2020,MMI2021}, asynchronous games \cite{McAHau2017}, and stochastic games \cite{DRWW2021,LiuWu2022}.
Roles of ZD strategies in the context of evolution of cooperation have been extensively investigated \cite{HNS2013,AdaHin2013,StePlo2013,HNT2013,SzoPer2014,HWTN2015,Aki2016,CWF2022,CheFu2023}.

Although the Cournot oligopoly game contains some social dilemma structure as noted above, it is not completely the same as social dilemmas.
This is because the action sets in the Cournot oligopoly game are infinite sets, whereas the action sets in typical social dilemma games contain only two actions, that is, cooperation and defection.
Probably for that reason, ZD strategies have not been discovered in the Cournot oligopoly game, and it is not clear whether players can autocratically control payoffs in the game.

If such unilateral payoff control is possible in the Cournot oligopoly game, a firm can lead the other firms to collusion.
In fact, such control is possible by ZD strategies in the repeated prisoner's dilemma game \cite{PreDys2012}.
If the payoffs of two players are positively correlated in a linear relation enforced by a ZD strategy, increase in payoff of the opponent implies increase in payoff of a player using the ZD strategy.
Therefore, if the opponent gradually improves its payoff by learning, the payoff of the player using the ZD strategy also increases up to some local optima.
Because ZD strategies do not assume rationality of other players, they are useful even if some players are boundedly rational \cite{BisNai2000}, in contrast to traditional strategies in classical game theory \cite{HTS2015}.

In this paper, we make two key contributions.
First, we prove that ZD strategies (or, autocratic strategies) also exist in the repeated Cournot oligopoly game.
Concretely, we show that ZD strategies pinning their own payoffs, positively correlated ZD strategies \cite{CheZin2014,MMI2022}, which are the class containing the extortionate ZD strategy and the generous ZD strategy, and negatively correlated ZD strategies exist in the game.
Especially, we find that a fair ZD strategy exists in the game, by which a player unilaterally obtains the average payoff of the opponents.
Second, we specify performance of the fair ZD strategy in evolution of collusion.
We numerically investigate whether collusion is achieved when the fair ZD strategy is used against adaptively learning players \cite{PreDys2012,CheZin2014,MMI2022}.
We find that, in a two-player case, the fair ZD strategy promotes collusion of a learning player, but in a three-player case, it cannot promote collusion.

This paper is organized as follows.
In Section \ref{sec:model}, we introduce the Cournot oligopoly game and its repeated version.
In Section \ref{sec:preliminaries}, we introduce autocratic strategies, and explain their properties related to the existence.
In Section \ref{sec:results}, we prove that some types of autocratic strategies exist in the repeated Cournot oligopoly game.
In Section \ref{sec:numerical}, we provide numerical results.
Section \ref{sec:discussion} is devoted to discussion.
The proofs of all theoretical results are provided in Section \ref{sec:methods}.

\section{Model}
\label{sec:model}
We consider the $N$-player Cournot oligopoly game with $N\geq 2$ \cite{FudTir1991,OsbRub1994}.
In this game, multiple firms produce the same goods at the same cost.
The model assumes that the good's price is determined by consumer demand, and the price decreases as the total amount of the goods increases.
Each firm needs to choose its production while considering production of the other firms.
If the production of each firm is too small, it obtains small positive profit.
However, if the production of a firm is too large, all firms may obtain negative profit.
Therefore, the situation is game-theoretic.
In the simplest model, the price linearly decreases as the total amount of the goods increases.

This oligopoly game is mathematically described as follows.
The action space of player (firm) $j$ is given by $A_j:=[0,x_\mathrm{max}]$ with $0<x_\mathrm{max}<\infty$ for all $j$.
The action of player $j$ is described as $x_j\in A_j$, which represents production of the goods by player $j$.
We collectively write the action profile as $\bm{x}:=\left( x_1, \cdots, x_N \right)$.
We also use the notation $\mathcal{A}:=\prod_j A_j$, $A_{-j}:=\prod_{k\neq j}A_k$, and $x_{-j}:= \left( x_1, \cdots, x_{j-1}, x_{j+1}, \cdots, x_N \right) \in A_{-j}$.
When we want to emphasize $x_j$ in $\bm{x}$, we write $\bm{x}=(x_j, x_{-j})$.
The payoff of player $j$ when the action profile is $\bm{x}$ corresponds to the profit  (the difference between sales and the total production cost) described as
\begin{align}
 s_j \left( \bm{x} \right) &= \left[ a-b\sum_{k=1}^N x_k \right] x_j \theta\left( a-b\sum_{k=1}^N x_k \right) - cx_j,
 \label{eq:payoff_cournot}
\end{align}
where $a>c>0$, $b>0$, and $\theta(\cdot)$ is the step function.
That is, $\left[ a-b\sum_{k=1}^N x_k \right] \theta\left( a-b\sum_{k=1}^N x_k \right)$ represents a price of the goods, and $c$ represents a cost of production.
The price is a non-negative decreasing function of the total production $\sum_{k=1}^N x_k$, and becomes zero when the total production is too large: $\sum_{k=1}^N x_k\geq a/b$.
In this paper, we assume that $x_\mathrm{max}\geq a/b$.

It is known that the Nash equilibrium of this game is $x_j=x^{(\mathrm{N})}:=\frac{a-c}{(N+1)b}$ and $s_j\left( \bm{x}^{(\mathrm{N})} \right)=\frac{(a-c)^2}{(N+1)^2b}$ for all $j$, where $\bm{x}^{(\mathrm{N})}:=\left( x^{(\mathrm{N})}, \cdots, x^{(\mathrm{N})} \right)$.
On the other hand, if all firms collude to monopolize the market, and share the payoffs equally, the realized state is $x_j=x^{(\mathrm{C})}:=\frac{a-c}{2Nb}$ and $s_j\left( \bm{x}^{(\mathrm{C})} \right)=\frac{(a-c)^2}{4Nb}$ for all $j$, where $\bm{x}^{(\mathrm{C})}:=\left( x^{(\mathrm{C})}, \cdots, x^{(\mathrm{C})} \right)$.
Because $s_j\left( \bm{x}^{(\mathrm{C})} \right)>s_j\left( \bm{x}^{(\mathrm{N})} \right)$, the game can be regarded as one example of social dilemmas.
In addition, the Walras equilibrium, which corresponds to perfect competition, is $x_j=x^{(\mathrm{W})}:=\frac{a-c}{Nb}$ and $s_j\left( \bm{x}^{(\mathrm{W})} \right)=0$ for all $j$, where $\bm{x}^{(\mathrm{W})}:=\left( x^{(\mathrm{W})}, \cdots, x^{(\mathrm{W})} \right)$.
In perfect competition, firms cannot influence the price, and the price $\left[ a-b\sum_{k=1}^N x_k \right] \theta\left( a-b\sum_{k=1}^N x_k \right)$ is equal to the marginal cost $c$.
In the limit $N\rightarrow \infty$, the total production in the Nash equilibrium converges to $Nx^{(\mathrm{N})}\rightarrow (a-c)/b$, where $(a-c)/b=Nx^{(\mathrm{W})}$ is the total production in perfect competition.
This result implies that perfect competition is realized if the number of firms is very large, and is known as the Cournot limit theorem.

We consider a repeated version of the Cournot oligopoly game.
We write the probability measure of an action profile at time $t$ by $P_t \left( \cdot \right)$.
The payoff of player $j$ in the repeated game is given by
\begin{align}
 \mathcal{S}_j &:= \left\langle s_j \right\rangle^*,
\end{align}
where $\left\langle \cdots \right\rangle^*$ is the expected value of the quantity $\cdots$ with respect to the limit probability of an action profile
\begin{align}
 P^*\left( \cdot \right) &:= (1-\delta) \sum_{t=1}^\infty \delta^{t-1} P_t \left( \cdot \right)
\end{align}
and $\delta$ is the discount factor satisfying $0\leq \delta <1$.
If the limit $\lim_{T\rightarrow \infty} \sum_{t=1}^T P_t \left( \cdot \right)/T$ exists, we obtain
\begin{align}
 \lim_{\delta\rightarrow 1^-} P^*\left( \cdot \right) &= \lim_{T\rightarrow \infty} \frac{1}{T} \sum_{t=1}^T P_t \left( \cdot \right).
 \label{eq:Pstar_1}
\end{align}
In this paper, we focus only on the undiscounted case $\delta\rightarrow 1^-$.

\section{Preliminaries}
\label{sec:preliminaries}
The memory-one strategy of player $j$ is described as the conditional probability measure $T_j\left( \cdot | \bm{x}^\prime \right)$ of action $x_j$ when the action profile in the previous round is $\bm{x}^{\prime}$.
We introduce the notation $s_0\left( \bm{x} \right):=1$.
For the case of bounded payoffs $\left\{ s_k \right\}_{k=1}^N$, McAvoy and Hauert introduced the concept of autocratic strategies as an extension of zero-determinant (ZD) strategies.
\begin{definition}[\cite{McAHau2016}]
A memory-one strategy of player $j$ is an \emph{autocratic strategy} when its strategy $T_j$ can be written in the form
\begin{align}
 \int \psi(x_j) dT_j\left( x_j | \bm{x}^\prime \right) - \psi\left( x^\prime_j \right) &= \sum_{k=0}^N \alpha_{k} s_{k} \left( \bm{x}^{\prime} \right) 
 \label{eq:AS}
\end{align}
with some coefficients $\left\{ \alpha_{k} \right\}$ and some bounded function $\psi(\cdot)$.
\end{definition}
Surprisingly, the autocratic strategy (\ref{eq:AS}) unilaterally enforces a linear relation between payoffs
\begin{align}
 0 &= \sum_{k=0}^N \alpha_{k} \left\langle s_k \right\rangle^*.
\end{align}
Therefore, autocratic strategies can be used to control payoffs unilaterally.
The left-hand side of Eq. (\ref{eq:AS}) is the difference between an integral of the function $\psi$ with respect to a memory-one strategy $T_j$ and an integral of the function $\psi$ with respect to the Repeat strategy \cite{Aki2016}.
The general meaning of the function $\psi$ is not clear, and it has been heuristically obtained \cite{Ued2022b,Ued2025}.
For two-player potential games, it is related to potential functions \cite{Ued2022}.
Recently, for stochastic games, the method for finding $\psi$ numerically was proposed \cite{MMHet2025}.
It is also noteworthy that autocratic strategies are equally powerful in multi-player games although they were originally introduced in two-player games \cite{McAHau2016}, because their proof did not use the number of players.
In this paper, we use the two words ``ZD strategy'' and ``autocratic strategy'' interchangeably.

Below we write $B\left( \bm{x}^{\prime} \right):=\sum_{k=0}^N \alpha_{k} s_{k} \left( \bm{x}^{\prime} \right)$.
We write the Dirac measure by $\delta_{x^\prime}\left( \cdot \right)$.
McAvoy and Hauert proved the following proposition.
\begin{proposition}[\cite{McAHau2016}]
\label{prop:two-point}
Suppose that there exist two actions $\underline{x}_j, \overline{x}_j \in A_j$ and $W>0$ such that
\begin{align}
 -W &\leq B\left( \underline{x}_j, x_{-j} \right) \leq 0 \quad (\forall x_{-j} \in A_{-j}) \nonumber \\
 0 &\leq B\left( \overline{x}_j, x_{-j} \right) \leq W \quad (\forall x_{-j} \in A_{-j}).
 \label{eq:existence_condition}
\end{align}
Then, when we restrict the action set of player $j$ from $A_j$ to $A_j^\prime:=\left\{ \underline{x}_j, \overline{x}_j \right\}$, the memory-one strategy of player $j$
\begin{align}
 T_j\left( \cdot | \bm{x}^\prime \right) &= p\left( \bm{x}^\prime \right) \delta_{\underline{x}_j} (\cdot) + \left( 1-p\left( \bm{x}^\prime \right) \right) \delta_{\overline{x}_j} (\cdot)
 \label{eq:transition_two-point}
\end{align}
with
\begin{align}
 p\left( \underline{x}_j, x_{-j} \right) &:= \frac{1}{W} B\left( \underline{x}_j, x_{-j} \right) + 1 \quad (\forall x_{-j} \in A_{-j}) \nonumber \\
 p\left( \overline{x}_j, x_{-j} \right) &:= \frac{1}{W} B\left( \overline{x}_j, x_{-j} \right) \quad (\forall x_{-j} \in A_{-j})
 \label{eq:p_two-point}
\end{align}
is an autocratic strategy unilaterally enforcing $\left\langle B \right\rangle^*=0$.
\end{proposition}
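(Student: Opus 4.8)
The plan is to verify directly that the two-point strategy (\ref{eq:transition_two-point})--(\ref{eq:p_two-point}) fits the definition of an autocratic strategy on the restricted action set $A_j^\prime = \{ \underline{x}_j, \overline{x}_j \}$. Two things must be established: that $p(\bm{x}^\prime)$ is a genuine probability, and that some bounded function $\psi$ makes the defining identity (\ref{eq:AS}) hold for every previous action profile $\bm{x}^\prime$ with $x_j^\prime \in A_j^\prime$. First I would check well-definedness of $p$. For the row $x_j^\prime = \underline{x}_j$, the hypothesis $-W \leq B(\underline{x}_j, x_{-j}) \leq 0$ gives $\frac{1}{W} B(\underline{x}_j, x_{-j}) \in [-1,0]$, so $p(\underline{x}_j, x_{-j}) \in [0,1]$; for $x_j^\prime = \overline{x}_j$, the hypothesis $0 \leq B(\overline{x}_j, x_{-j}) \leq W$ gives $p(\overline{x}_j, x_{-j}) \in [0,1]$. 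Hence $T_j$ is a legitimate probability measure.

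The one creative choice is the function $\psi$. Since $T_j$ is supported on only two points, $\psi$ need only be prescribed there, and I would take $\psi(\underline{x}_j) := W$ and $\psi(\overline{x}_j) := 0$, which is plainly bounded. With this choice the integral in (\ref{eq:AS}) collapses to the finite sum $\psi(\underline{x}_j) p(\bm{x}^\prime) + \psi(\overline{x}_j)\left( 1 - p(\bm{x}^\prime) \right) = W p(\bm{x}^\prime)$, so the left-hand side of (\ref{eq:AS}) equals $W p(\bm{x}^\prime) - \psi(x_j^\prime)$.

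It then remains to confirm $W p(\bm{x}^\prime) - \psi(x_j^\prime) = B(\bm{x}^\prime)$ in each of the two cases. When $x_j^\prime = \underline{x}_j$, substituting (\ref{eq:p_two-point}) gives $W\left( \frac{1}{W} B(\underline{x}_j, x_{-j}) + 1 \right) - W = B(\underline{x}_j, x_{-j})$; when $x_j^\prime = \overline{x}_j$, it gives $W \cdot \frac{1}{W} B(\overline{x}_j, x_{-j}) - 0 = B(\overline{x}_j, x_{-j})$. Thus (\ref{eq:AS}) holds with the same coefficients $\{ \alpha_k \}$ appearing in $B = \sum_{k=0}^N \alpha_k s_k$, so $T_j$ is autocratic, and by the general property recorded immediately after (\ref{eq:AS}) it unilaterally enforces $\left\langle B \right\rangle^* = 0$.

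The manipulations are elementary, so the only real difficulty is conceptual: recognizing that narrowing the action set to two points turns the integral into a two-term sum, and that the single scalar degree of freedom $\psi(\underline{x}_j) - \psi(\overline{x}_j) = W$ is precisely what the two affine constraints (one for each value of $x_j^\prime$) jointly require. The bounds in (\ref{eq:existence_condition}) do double duty here, simultaneously forcing $p \in [0,1]$ and fixing the common normalization $W$ that must equal this gap in $\psi$.
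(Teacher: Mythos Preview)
Your argument is correct and mirrors the paper's own verification: the paper likewise chooses a two-valued $\psi$ with gap $W$ (it takes $\psi(\underline{x}_j)=\psi_0$, $\psi(\overline{x}_j)=\psi_0+W$, obtaining $-B$ on the right-hand side, whereas your opposite sign convention yields $+B$), and the explicit check that $p\in[0,1]$ which you include is left implicit there.
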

Indeed, we can find
\begin{align}
 \int_{x_j \in A_j^\prime} \psi(x_j) dT_j\left( x_j | \bm{x}^\prime \right) - \psi\left( x^\prime_j \right) &= B \left( \bm{x}^{\prime} \right) \quad \left( \forall x^\prime_j \in A_j^\prime, \forall x^\prime_{-j} \in A_{-j} \right)
\end{align}
with $\psi\left( \underline{x}_j \right)=\psi_0 + W$ and $\psi\left( \overline{x}_j \right)=\psi_0$ $(\psi_0\in \mathbb{R})$.
McAvoy and Hauert called such autocratic strategies \emph{two-point autocratic strategies}, because such an autocratic strategy uses only two actions.
This proposition on two-point autocratic strategies is useful, since we can easily construct an autocratic strategy only by specifying the two actions $\underline{x}_j$ and $\overline{x}_j$.

Although they found that the condition (\ref{eq:existence_condition}) is a sufficient condition for the existence of general autocratic strategies, Ueda proved that it is also a necessary condition when the number of actions of each player is finite \cite{Ued2022b}.
However, we do not know whether it is also a necessary condition when the number of actions of some player is infinite.
At this stage, we only prove the following proposition about two-point autocratic strategies.
\begin{proposition}
\label{prop:existence_cournot}
If a two-point autocratic strategy (which uses only two actions) of player $j$ controlling $B$ exists, then there exist two actions $\underline{x}_j, \overline{x}_j \in A_j$ such that
\begin{align}
 B\left( \underline{x}_j, x_{-j} \right) &\leq 0 \quad (\forall x_{-j} \in A_{-j}) \nonumber \\
 B\left( \overline{x}_j, x_{-j} \right) &\geq 0 \quad (\forall x_{-j} \in A_{-j}).
 \label{eq:existence_condition_ueda}
\end{align}
\end{proposition}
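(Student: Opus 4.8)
The plan is to unpack the definition of a two-point autocratic strategy and read the sign of $B$ off the autocratic condition \eqref{eq:AS} directly. By hypothesis the conditional distribution $T_j(\cdot\mid\bm{x}^\prime)$ is supported on exactly two actions of player $j$; I will call them $y_1,y_2\in A_j$ and write the probability assigned to $y_1$ as $q(\bm{x}^\prime)\in[0,1]$, so that
\begin{align}
 T_j\left( \cdot \mid \bm{x}^\prime \right) &= q\left( \bm{x}^\prime \right)\delta_{y_1}(\cdot) + \left( 1-q\left( \bm{x}^\prime \right) \right)\delta_{y_2}(\cdot).
\end{align}
Because the action set of player $j$ is restricted to $\{y_1,y_2\}$, the previous action $x_j^\prime$ also lies in $\{y_1,y_2\}$, so the bounded function $\psi$ enters only through the two numbers $\psi(y_1)$ and $\psi(y_2)$, and the integral $\int\psi\,dT_j$ is the convex combination $q(\bm{x}^\prime)\psi(y_1)+(1-q(\bm{x}^\prime))\psi(y_2)$.

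First I would substitute this into \eqref{eq:AS} for each of the two possible previous actions $x_j^\prime=y_1$ and $x_j^\prime=y_2$. A short calculation would give, for all $x_{-j}\in A_{-j}$,
\begin{align}
 B\left( y_1, x_{-j} \right) &= \left( 1-q\left( y_1, x_{-j} \right) \right)\left( \psi(y_2)-\psi(y_1) \right), \nonumber \\
 B\left( y_2, x_{-j} \right) &= -\,q\left( y_2, x_{-j} \right)\left( \psi(y_2)-\psi(y_1) \right).
\end{align}
The prefactors $1-q(y_1,x_{-j})$ and $q(y_2,x_{-j})$ are both nonnegative, so the signs of $B(y_1,\cdot)$ and $B(y_2,\cdot)$ are controlled entirely by the sign of the fixed number $\psi(y_2)-\psi(y_1)$, and the two are opposite to one another.

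The last step is simply to assign the labels so as to match \eqref{eq:existence_condition_ueda}. If $\psi(y_2)-\psi(y_1)<0$ then $B(y_1,x_{-j})\le 0$ and $B(y_2,x_{-j})\ge 0$ for every $x_{-j}$, so I would set $\underline{x}_j:=y_1$ and $\overline{x}_j:=y_2$; if $\psi(y_2)-\psi(y_1)>0$ the same inequalities hold with the labels swapped; and if $\psi(y_1)=\psi(y_2)$ both right-hand sides vanish identically, so $B(y_1,\cdot)=B(y_2,\cdot)=0$ and either assignment works.

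I do not expect a genuine obstacle here, since the statement is essentially the bookkeeping behind the two-point construction of Proposition \ref{prop:two-point} run in reverse. The only points that need care are (i) invoking the restriction of the action set so that $\psi$ is evaluated only at $y_1,y_2$ and $\int\psi\,dT_j$ is a genuine two-point convex combination, and (ii) choosing the labeling according to the sign of $\psi(y_2)-\psi(y_1)$, including the degenerate equal-value case. I would also remark that, since $q\in[0,1]$, the same identities yield $\lvert B\rvert\le\lvert\psi(y_2)-\psi(y_1)\rvert$ on the two-point set, so one in fact recovers the full bound \eqref{eq:existence_condition} with $W=\lvert\psi(\overline{x}_j)-\psi(\underline{x}_j)\rvert$; the proposition records only the weaker sign statement \eqref{eq:existence_condition_ueda}, which is what is used subsequently.
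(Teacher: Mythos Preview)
Your argument is correct and follows essentially the same route as the paper. The paper phrases the key step slightly more abstractly---subtracting $\psi_{\max}$ (respectively $\psi_{\min}$) inside the integral and using normalization to conclude $B(x_{j,\max},\cdot)\le 0$ and $B(x_{j,\min},\cdot)\ge 0$---whereas you write out the two-point convex combination explicitly and read the sign of $B$ off the factor $\psi(y_2)-\psi(y_1)$; these are the same computation, and your labeling agrees with the paper's ($\underline{x}_j$ is the $\arg\max$ of $\psi$, $\overline{x}_j$ the $\arg\min$). Your closing remark that the identities also yield the quantitative bound $|B|\le|\psi(y_2)-\psi(y_1)|$ is a nice extra: the paper only notes the equivalence with Proposition~\ref{prop:two-point} after the proof, appealing to boundedness of $B$, while you extract the explicit $W$ directly.
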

See Section \ref{subsec:existence_twopoint} for the proof.
Because $B$ is bounded, the condition in Proposition \ref{prop:existence_cournot} is equivalent to that in Proposition \ref{prop:two-point}.
Therefore, specifying two actions $\underline{x}_j$ and $\overline{x}_j$ satisfying the condition (\ref{eq:existence_condition}) is equivalent to specifying a two-point autocratic strategy.
These propositions are frequently used in order to specify necessary and sufficient conditions for the existence of several types of two-point autocratic strategies.

\section{Theoretical results}
\label{sec:results}
In this section, we investigate the existence of autocratic strategies in the Cournot oligopoly game.
We remark that, in the Cournot oligopoly game, the relation
\begin{align}
  \sum_{k\neq j} s_{k} \left( \bm{x}^{\prime} \right) &= \left[ a-b\sum_{l=1}^N x_l \right] \sum_{k\neq j}x_k \theta\left( a-b\sum_{l=1}^N x_l \right) - c \sum_{k\neq j}x_k
  \label{eq:smj}
\end{align}
holds.
Therefore, from the viewpoint of player $j$, the opponents $-j$ can be essentially regarded as one player with the action $\sum_{k\neq j}x_k$ and the payoff $\sum_{k\neq j} s_{k}$.
However, the domain of the action of ``player'' $-j$ is different from that of player $j$, since $\sum_{k\neq j}x_k \in [0, (N-1)x_\mathrm{max}]$.
Taking this difference into account, as far as we focus on the relation between $s_j$ and $\sum_{k\neq j} s_{k}$, we identify $x_{-j}$ with $\sum_{k\neq j}x_k$ below.

\subsection{Equalizer strategy}
First, we seek for equalizer strategies of player $j$ \cite{BNS1997,PreDys2012}, that is, those unilaterally enforcing $\sum_{k\neq j} \left\langle s_k \right\rangle^*=r$ with $r\in \mathbb{R}$.
For such a case, we need to set $B$ as $B\left( \bm{x}^{\prime} \right):=\sum_{k\neq j} s_{k} \left( \bm{x}^{\prime} \right) - r$.
\begin{theorem}
\label{thm:equalizer}
Two-point equalizer strategies in the Cournot oligopoly game do not exist for any $r$.
\end{theorem}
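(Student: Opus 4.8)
The plan is to invoke Proposition~\ref{prop:existence_cournot} in contrapositive form. For an equalizer we set $B\left( \bm{x}^\prime \right) = \sum_{k\neq j} s_k\left( \bm{x}^\prime \right) - r$, and a two-point equalizer controlling this $B$ exists only if there are actions $\underline{x}_j, \overline{x}_j \in A_j$ with $B\left( \underline{x}_j, x_{-j} \right) \leq 0$ and $B\left( \overline{x}_j, x_{-j} \right) \geq 0$ for all $x_{-j}$. I would therefore show that, for every $r \in \mathbb{R}$, no such pair exists. Following the remark around Eq.~(\ref{eq:smj}), I identify $x_{-j}$ with $y := \sum_{k\neq j} x_k \in \left[ 0, (N-1)x_\mathrm{max} \right]$ and write the opponents' aggregate payoff as $s_{-j}(x_j, y) := \left[ a - b(x_j+y) \right] y\, \theta\!\left( a - b(x_j+y) \right) - cy$, so that $B = s_{-j} - r$.

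The first step is to evaluate at $y = 0$. Since $s_{-j}(x_j, 0) = 0$ for every $x_j$, we get $B(x_j, 0) = -r$ independently of $x_j$. This disposes of the case $r \neq 0$ at once: if $r > 0$ then $B(x_j, 0) < 0$ for all $x_j$, so no $\overline{x}_j$ can meet $B\left( \overline{x}_j, 0 \right) \geq 0$; if $r < 0$ then $B(x_j, 0) > 0$ for all $x_j$, so no $\underline{x}_j$ can meet $B\left( \underline{x}_j, 0 \right) \leq 0$. In either subcase the pair required by Proposition~\ref{prop:existence_cournot} cannot exist.

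The substantive case is $r = 0$, where I would show that $\overline{x}_j$ cannot exist. The clean witness is $y = a/b$: this value is admissible because $x_\mathrm{max} \geq a/b$ and $N \geq 2$ give $a/b \leq (N-1)x_\mathrm{max}$, while for any $x_j \geq 0$ one has $x_j + a/b \geq a/b$, so the step function kills the price term and $s_{-j}(x_j, a/b) = -c\,a/b < 0$. Hence for every candidate $\overline{x}_j$ there is an admissible $y$ with $B\left( \overline{x}_j, y \right) = s_{-j}\left( \overline{x}_j, y \right) < 0$, contradicting $B \geq 0$. I would note that the obstruction is one-sided: an $\underline{x}_j$ does exist here, for instance any $x_j \geq (a-c)/b$, since then the opponents' price never exceeds $c$ and $s_{-j} \leq 0$ throughout. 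Combining with the previous paragraph settles all $r$.

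The main obstacle is the $r = 0$ case. The $r \neq 0$ cases collapse immediately at $y = 0$, but at $r = 0$ the value $B(x_j, 0) = 0$ is uninformative, and one must instead exploit the economic structure of the model: overproduction by the opponents drives the price below the marginal cost $c$, forcing their aggregate payoff negative regardless of player $j$'s action. The only point requiring care is checking that the overproducing value of $y$ lies in the admissible interval $\left[ 0, (N-1)x_\mathrm{max} \right]$, which is exactly what the standing assumption $x_\mathrm{max} \geq a/b$ secures.
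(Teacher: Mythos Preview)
Your proof is correct and follows essentially the same route as the paper: both use the test value $y=0$ to rule out $r\neq 0$ and the test value $y=a/b$ to rule out $r=0$ by showing $\overline{x}_j$ cannot exist. You are slightly more explicit than the paper in verifying that $a/b$ lies in the admissible range $[0,(N-1)x_\mathrm{max}]$, and your side remark that $\underline{x}_j$ does exist for $r=0$ is a nice (and correct) observation, though not needed for the theorem.
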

See Section \ref{subsec:proof_equalizer} for the proof.
Therefore, players cannot unilaterally set the opponents' total payoffs by two-point autocratic strategies.
This result is quite different from that in the repeated prisoner's dilemma game, because two-point equalizer strategies exist in that game.
(It should be noted that all ZD strategies in two-action games are two-point autocratic strategies.)

\subsection{Strategy pinning its own payoff}
Next, we seek for autocratic strategies of player $j$ pinning its own payoff, that is, those unilaterally enforcing $\left\langle s_j \right\rangle^*=r$ with $r\in \mathbb{R}$.
For such a case, we need to set $B$ as $B\left( \bm{x}^{\prime} \right):= s_{j} \left( \bm{x}^{\prime} \right) - r$.
\begin{theorem}
\label{thm:self_pinning}
Two-point autocratic strategies pinning its own payoff in the Cournot oligopoly game exist only for $-cx_\mathrm{max}\leq r \leq 0$.
\end{theorem}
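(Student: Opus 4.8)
The plan is to invoke the existence criterion from Propositions~\ref{prop:two-point} and \ref{prop:existence_cournot}. Since $s_j$ takes values in the bounded interval $[-cx_\mathrm{max},(a-c)^2/(4b)]$, the function $B=s_j-r$ is bounded, so the two propositions together tell us that a two-point autocratic strategy pinning $\langle s_j\rangle^*$ to $r$ exists if and only if there are actions $\underline{x}_j,\overline{x}_j\in A_j$ with $s_j(\underline{x}_j,x_{-j})\le r$ and $s_j(\overline{x}_j,x_{-j})\ge r$ for every $x_{-j}\in A_{-j}$. Using the identification of $x_{-j}$ with the aggregate $Y:=\sum_{k\neq j}x_k\in[0,(N-1)x_\mathrm{max}]$, this reduces the task to controlling, for a fixed $x_j$, the extremal values of $s_j(x_j,\cdot)$ over $Y$.

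First I would compute those extremal values. For fixed $x_j$ the map $Y\mapsto s_j(x_j,Y)$ equals $(a-c)x_j-bx_j^2-bx_jY$ while the price is positive (i.e.\ $x_j+Y\le a/b$) and equals $-cx_j$ once the price hits zero; the two pieces agree at $x_j+Y=a/b$, so $s_j(x_j,\cdot)$ is continuous and nonincreasing in $Y$. Its maximum is therefore attained at $Y=0$ and its minimum at the largest admissible $Y$. Because $x_\mathrm{max}\ge a/b$ and $N\ge 2$ give $(N-1)x_\mathrm{max}\ge a/b$, the opponents can always drive the price to zero, so $\min_Y s_j(x_j,Y)=-cx_j$, while $\max_Y s_j(x_j,Y)=(a-c)x_j-bx_j^2$ for $x_j\le a/b$ (and $=-cx_j$ otherwise).

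With these in hand the two one-sided conditions decouple. The action $\overline{x}_j$ must satisfy $-c\overline{x}_j=\min_Y s_j(\overline{x}_j,Y)\ge r$; since the left-hand side is nonpositive this forces $r\le 0$, and conversely $\overline{x}_j=0$ works whenever $r\le 0$. The action $\underline{x}_j$ must satisfy $\max_Y s_j(\underline{x}_j,Y)\le r$, so $\underline{x}_j$ exists precisely when $r$ is at least the minimum over $\underline{x}_j\in[0,x_\mathrm{max}]$ of $g(\underline{x}_j):=\max_Y s_j(\underline{x}_j,Y)$. I would analyze $g$ directly: on $[0,a/b]$ it is the downward parabola $(a-c)x-bx^2$, and on $[a/b,x_\mathrm{max}]$ it is the decreasing line $-cx$ (the pieces matching at value $-ac/b$), so its global minimum over $[0,x_\mathrm{max}]$ is attained at the endpoint $x_\mathrm{max}$ with value $-cx_\mathrm{max}$. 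Hence $\underline{x}_j$ exists iff $r\ge -cx_\mathrm{max}$. Combining the two ranges gives existence exactly for $-cx_\mathrm{max}\le r\le 0$.

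The routine-but-essential work is the monotonicity and extremal analysis of $s_j(x_j,\cdot)$ in the second step; the only genuine subtlety is recognizing that the standing assumption $x_\mathrm{max}\ge a/b$ (together with $N\ge 2$) is exactly what guarantees the opponents can force the price to zero, pinning $\min_Y s_j(x_j,Y)$ at $-cx_j$ and thereby making the endpoint $x_j=x_\mathrm{max}$ — rather than an interior critical point of the parabola — govern the lower bound $-cx_\mathrm{max}$ on $r$.
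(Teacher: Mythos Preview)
Your argument is correct and follows essentially the same approach as the paper: both invoke Propositions~\ref{prop:two-point} and~\ref{prop:existence_cournot}, both arrive at the witnesses $\overline{x}_j=0$ and $\underline{x}_j=x_\mathrm{max}$ for the existence range, and both rule out $r>0$ and $r<-cx_\mathrm{max}$ by observing that the corresponding inequality fails uniformly in $x_j$. The only presentational difference is that you carry out a full extremal analysis of $Y\mapsto s_j(x_j,Y)$ and of $g(x_j)=\max_Y s_j(x_j,Y)$, whereas the paper argues more directly by plugging in the single test value $x_{-j}=a/b$ for the case $r>0$ and invoking the global bound $s_j\ge -cx_\mathrm{max}$ for the case $r<-cx_\mathrm{max}$; your route is slightly more systematic but not materially different.
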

See Section \ref{subsec:proof_self_pinning} for the proof.
We find that $\overline{x}_j=0$ and $\underline{x}_j=x_\mathrm{max}$ if such two-point autocratic strategies exist.

This result is also quite different from that in the repeated prisoner's dilemma game, because ZD strategies pinning its own payoff do not exist in that game \cite{PreDys2012}.
We also remark that the result for $r=0$ seems to be intuitive, because it is realized by producing nothing $(x_j=0)$ in all rounds.
Theorem \ref{thm:self_pinning} claims that such control of its own payoff is possible even for $-cx_\mathrm{max}\leq r < 0$.
This comes from the fact that the one-shot payoff (\ref{eq:payoff_cournot}) can be unilaterally controlled to a negative value if production of the player is too large: $s_j \left( \bm{x} \right) = - cx_j$ for $x_j\geq a/b$.
However, players have no incentive to adopt such autocratic strategies, because players may obtain positive payoffs by using other strategies.

\subsection{Positively correlated strategies}
Here, we seek for positively correlated autocratic strategies of player $j$ \cite{CheZin2014,MMI2022}, which unilaterally enforce
\begin{align}
 \left\langle s_j \right\rangle^* - \kappa &= \chi \left( \frac{1}{N-1} \sum_{k\neq j} \left\langle s_k \right\rangle^* - \kappa \right)
 \label{eq:linear_extortionate}
\end{align}
with $\chi > 0$ and $-cx_\mathrm{max}\leq \kappa \leq (a-c)^2/(4b)$.
For such a case, we need to set $B$ as
\begin{align}
 B\left( \bm{x}^{\prime} \right) &:= s_{j}\left( \bm{x}^{\prime} \right) - \frac{\chi}{N-1} \sum_{k\neq j} s_{k} \left( \bm{x}^{\prime} \right) + (\chi-1)\kappa.
 \label{eq:B_extortionate}
\end{align}
\begin{theorem}
\label{thm:extortionate}
Two-point positively correlated autocratic strategies exist only for (i) $\chi= 1$ or (ii) $0<\chi < 1$ and $\kappa\leq 0$.
\end{theorem}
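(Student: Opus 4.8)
The plan is to first collapse the two-variable inequalities into one-variable ones. Using the identification of $x_{-j}$ with $\sum_{k\neq j}x_k\in[0,(N-1)x_\mathrm{max}]$ together with relation (\ref{eq:smj}), both $s_j$ and $\sum_{k\neq j}s_k$ carry the common factor $D(\bm{x}):=\left[a-b(x_j+x_{-j})\right]\theta\left(a-b(x_j+x_{-j})\right)-c$, namely $s_j=Dx_j$ and $\sum_{k\neq j}s_k=Dx_{-j}$, so that Eq. (\ref{eq:B_extortionate}) becomes
\begin{align}
 B\left(\bm{x}^\prime\right) &= D\left(\bm{x}^\prime\right)\left(x_j^\prime-\frac{\chi}{N-1}x_{-j}^\prime\right)+(\chi-1)\kappa ,
\end{align}
where $D=a-c-b(x_j+x_{-j})$ for $x_j+x_{-j}<a/b$ and $D=-c$ otherwise. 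Since $B$ is bounded, Propositions \ref{prop:two-point} and \ref{prop:existence_cournot} show that the desired strategy exists if and only if there are actions $\underline{x}_j$ with $B(\underline{x}_j,x_{-j})\le0$ for all $x_{-j}$ and $\overline{x}_j$ with $B(\overline{x}_j,x_{-j})\ge0$ for all $x_{-j}$; everything reduces to deciding when such a pair exists.

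For the non-existence direction I would test $B$ at three values of $x_{-j}$: the point $x_{-j}^{0}:=(a-c)/b-x_j$ (admissible when $0\le x_j\le(a-c)/b$), at which $D=0$ and hence $B(x_j,x_{-j}^{0})=(\chi-1)\kappa$; the point $x_{-j}=0$, at which $B(x_j,0)=Dx_j+(\chi-1)\kappa$; and the point $x_{-j}=(N-1)x_\mathrm{max}$, at which $D=-c$ and $B(x_j,(N-1)x_\mathrm{max})=-cx_j+c\chi x_\mathrm{max}+(\chi-1)\kappa$. Two sub-cases then cover the complement of (i)--(ii). If $(\chi-1)\kappa<0$ (i.e. $0<\chi<1,\kappa>0$ or $\chi>1,\kappa<0$) then no $\overline{x}_j$ exists: for $\overline{x}_j\le(a-c)/b$ the point $x_{-j}^{0}$ gives $B=(\chi-1)\kappa<0$, while for $\overline{x}_j>(a-c)/b$ one has $D<0$ at $x_{-j}=0$, so $B(\overline{x}_j,0)<0$. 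If instead $\chi>1$ and $\kappa\ge0$ then no $\underline{x}_j$ exists: the constraint at $x_{-j}=(N-1)x_\mathrm{max}$ forces $\underline{x}_j\ge\chi x_\mathrm{max}+(\chi-1)\kappa/c>x_\mathrm{max}$, impossible in $A_j$. These two sub-cases together exhaust $\chi>1$ (any $\kappa$) and $0<\chi<1,\kappa>0$.

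For the existence direction I would exhibit the pair explicitly. Choosing $\underline{x}_j=x_\mathrm{max}$ forces $x_j+x_{-j}\ge x_\mathrm{max}\ge a/b$, so $D\equiv-c$ and $B(x_\mathrm{max},x_{-j})=-cx_\mathrm{max}+\frac{c\chi}{N-1}x_{-j}+(\chi-1)\kappa$ is affine and increasing in $x_{-j}$, with maximum $-(1-\chi)(cx_\mathrm{max}+\kappa)\le0$ attained at $x_{-j}=(N-1)x_\mathrm{max}$; hence $B(\underline{x}_j,\cdot)\le0$ throughout (this is where $\chi\le1$ and $\kappa\ge-cx_\mathrm{max}$ enter). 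For $\overline{x}_j$ the idea is to make the sign change of $D$, occurring at $x_{-j}=(a-c)/b-x_j$, coincide with the sign change of the bracket $x_j-\frac{\chi}{N-1}x_{-j}$, occurring at $x_{-j}=\frac{N-1}{\chi}x_j$; equating the two loci yields $\overline{x}_j=\frac{\chi(a-c)}{b(N-1+\chi)}\in\bigl(0,(a-c)/b\bigr)$. With this choice $D$ and the bracket always share the same sign, so their product is nonnegative for every $x_{-j}$, and since $(\chi-1)\kappa\ge0$ in both (i) and (ii) we obtain $B(\overline{x}_j,\cdot)\ge(\chi-1)\kappa\ge0$. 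Feeding $\underline{x}_j$ and $\overline{x}_j$ into Proposition \ref{prop:two-point} then produces the strategy.

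The step I expect to be the main obstacle is the construction of $\overline{x}_j$. Because $D$ changes sign strictly inside the range of $x_{-j}$, the product $D\left(x_j-\frac{\chi}{N-1}x_{-j}\right)$ is generically negative on an interval, and only the exact alignment of the two sign-change loci removes that interval and lets the constant $(\chi-1)\kappa$ keep $B$ nonnegative; finding this aligning action and checking that it lies in $[0,x_\mathrm{max}]$, together with verifying that the positive-price branch $D=a-c-b(x_j+x_{-j})$ and the zero-price branch $D=-c$ match consistently across the kink at $x_j+x_{-j}=a/b$, is the delicate part on which the whole argument hinges.
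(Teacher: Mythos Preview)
Your proposal is correct and follows essentially the same route as the paper: both factor $B$ as $D\cdot\bigl(x_j-\tfrac{\chi}{N-1}x_{-j}\bigr)+(\chi-1)\kappa$, both construct the strategy from the pair $\underline{x}_j=x_\mathrm{max}$ and $\overline{x}_j=\tfrac{\chi(a-c)}{b(N-1+\chi)}$, and both rule out $\chi>1,\ \kappa\ge0$ via the test point $x_{-j}=(N-1)x_\mathrm{max}$. The only notable differences are presentational: you supply the geometric motivation (aligning the zero of $D$ with the zero of the bracket) that the paper omits, and for the case $(\chi-1)\kappa<0$ you use $x_j$-dependent test points $x_{-j}=(a-c)/b-x_j$ and $x_{-j}=0$, whereas the paper uses the single test point $x_{-j}=\tfrac{N-1}{\chi+N-1}\tfrac{a-c}{b}$, which yields $B(x_j,\cdot)\le(\chi-1)\kappa$ uniformly in $x_j$.
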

See Section \ref{subsec:proof_extortionate} for the proof.
We find that $\overline{x}_j=\chi/(\chi+N-1) \cdot (a-c)/b$ and $\underline{x}_j=x_\mathrm{max}$ if such autocratic strategies exist.

This result is also different from that in the repeated prisoner's dilemma game, because two-point positively correlated ZD strategies exist only for $\chi\geq 1$ in that game \cite{PreDys2012,StePlo2013}.
For the case $0<\chi<1$, if $\left\langle s_j \right\rangle^* - \kappa \gtrless 0$, Eq. (\ref{eq:linear_extortionate}) implies $\left\langle s_j \right\rangle^* - \kappa \lessgtr \sum_{k\neq j} \left\langle s_k \right\rangle^*/(N-1) - \kappa$.
Therefore, setting $0<\chi<1$ and $\kappa=0$ leads to low-risk low-return autocratic strategies.
We also remark that, when $\chi \rightarrow 0$, the two-point positively correlated autocratic strategy is reduced to the autocratic strategy in Theorem \ref{thm:self_pinning}, since $\overline{x}_j=\chi/(\chi+N-1) \cdot (a-c)/b \rightarrow 0$.
Moreover, when $\chi = 1$, the two-point positively correlated autocratic strategy can be regarded as a \emph{fair autocratic strategy} \cite{HWTN2014}, since it unilaterally enforces
\begin{align}
 \left\langle s_j \right\rangle^* &= \frac{1}{N-1} \sum_{k\neq j} \left\langle s_k \right\rangle^*,
 \label{eq:linear_aus}
\end{align}
that is, it is guaranteed to obtain the average payoff of the opponents.
When $N=2$, it is reduced to the autocratic strategy which unilaterally equalizes the payoffs of two players \cite{Ued2022,Ued2022b}.
Such a strategy is contained in the class of unbeatable strategies \cite{DOS2012b,DOS2014} (or, rival strategies \cite{HTS2015}), which always obtain payoffs no less than that of the opponent irrespective of the opponent's strategy in two-player symmetric games.
For $N\geq 3$, the two-point positively correlated autocratic strategy with $\chi=1$ can be interpreted as an unbeatable strategy with respect to group average.
Note that, when $\chi =1$, the action $\overline{x}_j=\chi/(\chi+N-1) \cdot (a-c)/b=(a-c)/(Nb)$ is one in the Walras equilibrium $x^{(\mathrm{W})}$.
It has been known that the Walras equilibrium action $x^{(\mathrm{W})}$ is an unbeatable action against any monomorphic opponents \cite{Veg1997}.
The fair autocratic strategy uses this property of the Walras equilibrium action to control payoffs in the repeated game.

\subsection{Negatively correlated strategies}
Here, we seek for negatively correlated autocratic strategies of player $j$, which unilaterally enforce
\begin{align}
 \left\langle s_j \right\rangle^* - \kappa &= \chi \left( \frac{1}{N-1} \sum_{k\neq j} \left\langle s_k \right\rangle^* - \kappa \right)
 \label{eq:linear_negative}
\end{align}
with $\chi < 0$ and $-cx_\mathrm{max}\leq \kappa \leq (a-c)^2/(4b)$.
For such a case, we again need to set $B$ as
\begin{align}
 B\left( \bm{x}^{\prime} \right) &:= s_{j}\left( \bm{x}^{\prime} \right) - \frac{\chi}{N-1} \sum_{k\neq j} s_{k} \left( \bm{x}^{\prime} \right) + (\chi-1)\kappa.
 \label{eq:B_negative}
\end{align}
\begin{theorem}
\label{thm:negative}
Two-point negatively correlated autocratic strategies exist only for $\left| \chi \right| \leq 1$ and $-cx_\mathrm{max}/\left( 1 + \left| \chi \right| \right) \leq \kappa \leq - \left| \chi \right| cx_\mathrm{max}/\left( 1 + \left| \chi \right| \right)$.
\end{theorem}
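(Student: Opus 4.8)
The plan is to invoke Propositions \ref{prop:existence_cournot} and \ref{prop:two-point}, which (since $B$ is bounded) reduce the claim to the question of whether there exist actions $\overline{x}_j,\underline{x}_j\in A_j$ with $B(\overline{x}_j,x_{-j})\ge 0$ and $B(\underline{x}_j,x_{-j})\le 0$ for all $x_{-j}\in A_{-j}$. I would start from the reformulation (\ref{eq:B_extortionate_mod}), which applies here verbatim because (\ref{eq:B_negative}) is identical to (\ref{eq:B_extortionate}). Writing $\chi=-|\chi|$, the decisive feature of the negatively correlated case is that the second bracket becomes $x_j-\tfrac{\chi}{N-1}\sum_{k\neq j}x_k=x_j+\tfrac{|\chi|}{N-1}\sum_{k\neq j}x_k\ge 0$, so, unlike the positively correlated case, no sign changes or squared completions arise. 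I would also record that the price-cost factor multiplying this bracket lies in $[-c,a-c]$ and equals exactly $-c$ whenever the total production satisfies $\sum_l x_l\ge a/b$, and that $(\chi-1)\kappa=-(1+|\chi|)\kappa$.

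For sufficiency I would test the candidate actions $\overline{x}_j=0$ and $\underline{x}_j=x_\mathrm{max}$. At $\overline{x}_j=0$ the first bracket vanishes and $B(0,x_{-j})=\tfrac{|\chi|}{N-1}\sum_{k\neq j}s_k(0,x_{-j})-(1+|\chi|)\kappa$; using the uniform bound $\sum_{k\neq j}s_k\ge -c\sum_{k\neq j}x_k\ge -c(N-1)x_\mathrm{max}$ gives $B(0,x_{-j})\ge -c|\chi|x_\mathrm{max}-(1+|\chi|)\kappa$, which is nonnegative exactly when $\kappa\le -|\chi|cx_\mathrm{max}/(1+|\chi|)$. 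At $\underline{x}_j=x_\mathrm{max}$ the total production is at least $x_\mathrm{max}\ge a/b$, so the price-cost factor equals $-c$ and $B(x_\mathrm{max},x_{-j})=-c\bigl[x_\mathrm{max}+\tfrac{|\chi|}{N-1}\sum_{k\neq j}x_k\bigr]-(1+|\chi|)\kappa\le -cx_\mathrm{max}-(1+|\chi|)\kappa$, which is nonpositive exactly when $\kappa\ge -cx_\mathrm{max}/(1+|\chi|)$. Proposition \ref{prop:two-point} then produces the desired strategy throughout the stated $\kappa$-interval.

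For necessity I would evaluate the same inequalities at extreme opponent actions. If $\overline{x}_j$ exists, taking $\sum_{k\neq j}x_k=(N-1)x_\mathrm{max}$ (which drives the total above $a/b$, so the factor is $-c$) forces $0\le B(\overline{x}_j,\cdot)=-c[\overline{x}_j+|\chi|x_\mathrm{max}]-(1+|\chi|)\kappa$, hence $\kappa\le -|\chi|cx_\mathrm{max}/(1+|\chi|)$ since $\overline{x}_j\ge 0$. If $\underline{x}_j$ exists, taking $\sum_{k\neq j}x_k=0$ gives $0\ge B(\underline{x}_j,0)=s_j(\underline{x}_j,0)-(1+|\chi|)\kappa\ge -cx_\mathrm{max}-(1+|\chi|)\kappa$ (recall $s_j\ge -cx_\mathrm{max}$), hence $\kappa\ge -cx_\mathrm{max}/(1+|\chi|)$. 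Both bounds can hold simultaneously only if the interval $[-cx_\mathrm{max}/(1+|\chi|),\,-|\chi|cx_\mathrm{max}/(1+|\chi|)]$ is nonempty, i.e. $|\chi|\le 1$, which supplies the remaining constraint and closes the argument.

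The only delicate point is bookkeeping with the step function: the whole argument hinges on selecting evaluation points that push the total production above the choke level $a/b$ so the price-cost factor collapses to the constant $-c$, together with the sign observation $x_j+\tfrac{|\chi|}{N-1}\sum_{k\neq j}x_k\ge 0$. Once these two facts are in place the estimates are one-line, so I expect no substantive obstacle beyond this case analysis; indeed the negative-$\chi$ case should be cleaner than the positively correlated one precisely because the second bracket never changes sign.
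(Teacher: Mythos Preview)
Your proposal is correct and follows essentially the same approach as the paper: both proofs rest on the sign observation that $x_j+\tfrac{|\chi|}{N-1}\sum_{k\neq j}x_k\ge 0$, use $\overline{x}_j=0$ and $\underline{x}_j=x_\mathrm{max}$ for sufficiency, and test the opponent extremes $\sum_{k\neq j}x_k\in\{0,(N-1)x_\mathrm{max}\}$ for necessity. The only organizational difference is that the paper carries out an explicit case split on $(|\chi|,\kappa)$, whereas you derive the two $\kappa$-bounds directly from the existence of $\overline{x}_j$ and $\underline{x}_j$ and then read off $|\chi|\le 1$ from the nonemptiness of the resulting interval; this packaging is a bit more compact but not substantively different. (One cosmetic remark: your phrase ``the first bracket vanishes'' at $\overline{x}_j=0$ is slightly off---what actually vanishes is the $s_j$ term---but the formula you then write for $B(0,x_{-j})$ is correct.)
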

See Section \ref{subsec:proof_negative} for the proof.
We find that $\overline{x}_j=0$ and $\underline{x}_j=x_\mathrm{max}$ if such two-point autocratic strategies exist.

It should be noted that when $\chi \rightarrow -0$, the two-point negatively correlated autocratic strategy is also reduced to the autocratic strategy in Theorem \ref{thm:self_pinning}.
When $\chi=-1$, $\kappa=-cx_\mathrm{max}/2$ must hold, and the two-point negatively correlated autocratic strategy is similar to the ZD strategy pinning the sum of payoffs of two players in the prisoner's dilemma game \cite{HTS2015,Ued2021b}.
However, for the Cournot oligopoly game, $\kappa<0$ must hold, and the use of such autocratic strategies may be limited.

\section{Numerical results}
\label{sec:numerical}
In this section, we numerically investigate performance of the autocratic strategies in the previous section.
We set model parameters $a=2.0$, $b=1.0$, $c=1.0$, and $x_\mathrm{max}=2.5$.
Because $B$ in Eqs. (\ref{eq:B_extortionate}) and (\ref{eq:B_negative}) for the two-point autocratic strategies satisfies
\begin{align}
 \left| B\left( \bm{x}^{\prime} \right) \right| &\leq \left| s_{j}\left( \bm{x}^{\prime} \right) \right| + \frac{\left| \chi \right|}{N-1} \sum_{k\neq j} \left| s_{k} \left( \bm{x}^{\prime} \right) \right| + \left| (\chi-1)\kappa \right| \nonumber \\
 &\leq (1+\left|\chi \right|) \max\left\{ \frac{(a-c)^2}{4b},  cx_\mathrm{max} \right\} + \left| (\chi-1)\kappa \right|,
\end{align}
we set
\begin{align}
 W &= (1+\left|\chi \right|) \max\left\{ \frac{(a-c)^2}{4b},  cx_\mathrm{max} \right\} + \left| (\chi-1)\kappa \right|.
\end{align}
We approximately calculate the payoffs $\mathcal{S}_k$ in the repeated game by using time average of one sample: $\sum_{t=1}^{T} s_k^{(t)}/T$.
We remark that the limit (\ref{eq:Pstar_1}) exists in all simulations below because we use fixed finite-memory strategies or adaptive memory-one strategies which seem to converge, as the opponents' strategies.

\subsection{Autocratic strategies against fixed memory-zero opponents}
First, we consider situations where player $j$ uses the two-point autocratic strategies with various $(\chi, \kappa)$ and all opponents repeat fixed actions.
We set $N=10$ and $T=10^6$.
We assume that all other players $k\neq j$ use the same memory-zero strategy
\begin{align}
 T_k\left( \cdot | \bm{x}^\prime \right) &= \delta_{x_k} (\cdot)
\end{align}
with $x_k=0.01\times i_\mathrm{oppo}/(N-1)\times (a-c)/b$, and change the integer $i_\mathrm{oppo}$ in the range $i_\mathrm{oppo}\in [0, 200]$.
The procedure is summarized in Algorithm \ref{algo:extortionate}.

\begin{algorithm}
\caption{Autocratic strategies against fixed memory-zero opponents}
\label{algo:extortionate}
\begin{algorithmic}[1]
\REQUIRE Parameters of models $a, b, c, x_\mathrm{max}, N$, total time $T$
\REQUIRE Feasible parameters of two-point autocratic strategies $\chi, \kappa, W$
\ENSURE Time-averaged payoffs $S_l$ for all $l$
\STATE Set $\overline{x}_j$ and $\underline{x}_j$
\FOR{$i_\mathrm{oppo}=0$ to $200$}
\STATE Initial condition: $x_j=\overline{x}_j$
\FOR{$k\neq j$}
\STATE $x_k=0.01\times i_\mathrm{oppo}/(N-1)\times (a-c)/b$
\ENDFOR
\STATE Initialize time-averaged payoffs $S_l \gets 0$ for all $l$
\STATE Calculate $B$ from Eq. (\ref{eq:B_extortionate}) or (\ref{eq:B_negative})
\FOR{$t=1$ to $T$}
\STATE Update $x_j$ by transition probability (\ref{eq:transition_two-point}) and (\ref{eq:p_two-point})
\STATE Calculate payoffs (\ref{eq:payoff_cournot}) for all players
\STATE Update total payoffs $S_l \leftarrow S_l + s_l$ for all $l$
\STATE Calculate $B$ from Eq. (\ref{eq:B_extortionate}) or (\ref{eq:B_negative})
\ENDFOR
\STATE Output time-averaged payoffs $S_l/T$ for all $l$
\ENDFOR
\end{algorithmic}
\end{algorithm}

In Fig \ref{fig:linear_extortionate_vs_zero}, we display relations between $\mathcal{S}_j$ and $\sum_{k\neq j} \mathcal{S}_k/(N-1)$ for various $(\chi, \kappa)$.
\begin{figure}[tbp]
\includegraphics[clip, width=8.0cm]{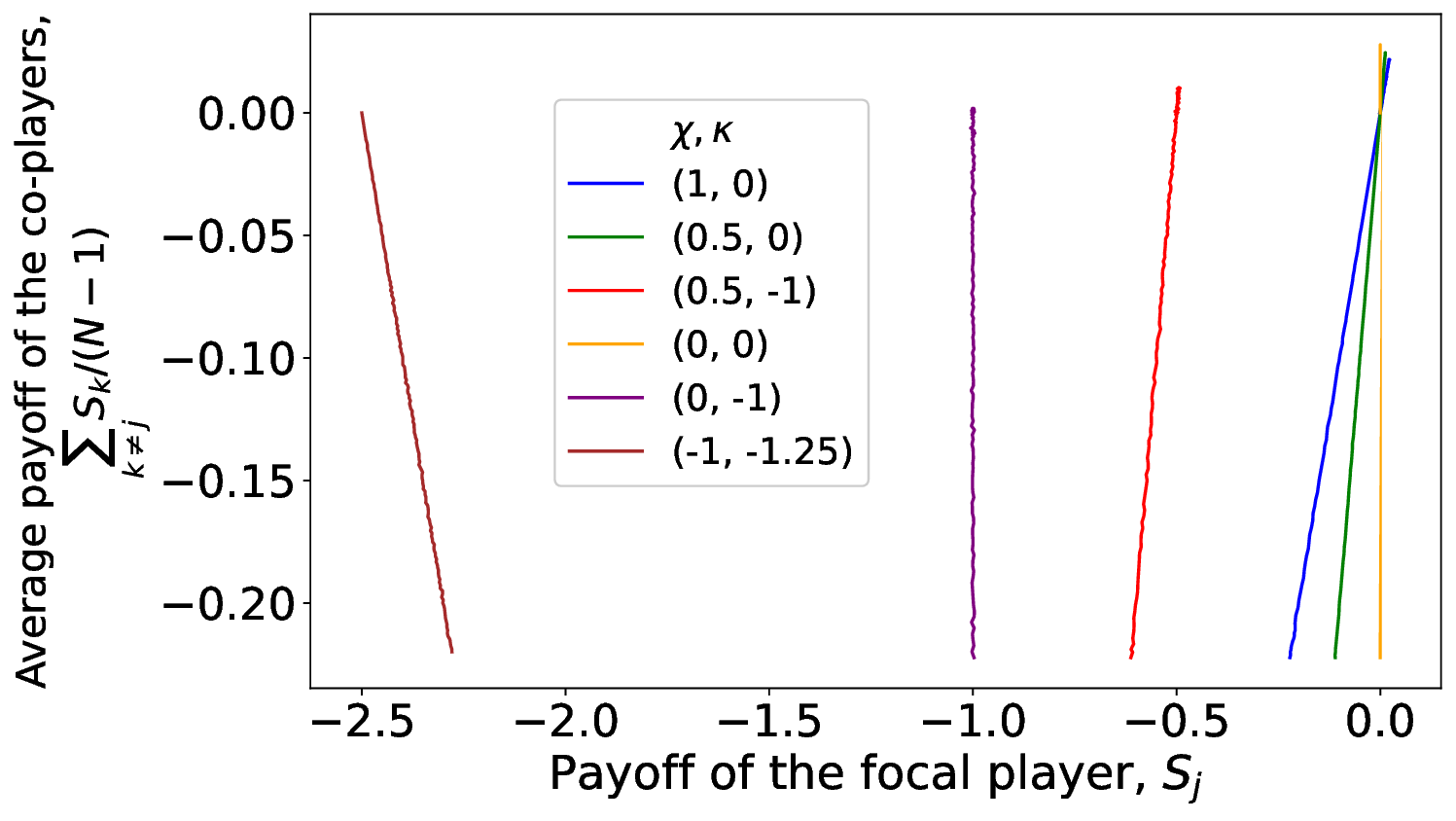}
\caption{Linear relations between $\mathcal{S}_j$ and $\sum_{k\neq j} \mathcal{S}_k/(N-1)$ when player $j$ uses the two-point autocratic strategies with various $(\chi, \kappa)$ and all opponents repeat fixed actions.}
\label{fig:linear_extortionate_vs_zero}
\end{figure}
We find that a linear relation (\ref{eq:linear_extortionate}) or (\ref{eq:linear_negative}) is indeed enforced for each $(\chi, \kappa)$.
We also find that both payoffs $\mathcal{S}_j$ and $\sum_{k\neq j} \mathcal{S}_k/(N-1)$ become positive only for restricted regions, such as the right ends of the lines for $(\chi, \kappa)=(1, 0), (0.5, 0)$.
Therefore, it implies that the use of the autocratic strategies is very limited, considering incentive for adopting such strategies.
Below we focus on the fair autocratic strategy $(\chi, \kappa)=(1, 0)$.

\subsection{Autocratic strategies against random memory-one opponents}
\label{subsec:random_m1}
Next, we consider the situation where player $j$ uses a two-point autocratic strategy with $(\chi, \kappa)=(1, 0)$, and the opponents $k \neq j$ adopt random memory-one strategies.
We assume that each opponent takes either the cooperative action $x^{(\mathrm{C})}$ or the Nash equilibrium action $x^{(\mathrm{N})}$, which are the most representative actions in the Cournot oligopoly game.
A memory-one strategy of player $k$ is written as
\begin{align}
 T_k\left( \cdot | \bm{x}^\prime \right) &= q_k\left( \bm{x}^\prime \right) \delta_{x^{(\mathrm{C})}} (\cdot) + \left( 1- q_k\left( \bm{x}^\prime \right) \right) \delta_{x^{(\mathrm{N})}} (\cdot),
\end{align}
where $q_k\left( \bm{x}^\prime \right)$ corresponds to the probability to use $x^{(\mathrm{C})}$ when the action profile in the previous round was $\bm{x}^\prime$.
For simplicity, we below write $\underline{x}_j$ and $\overline{x}_j$ as $\underline{x}$ and $\overline{x}$, respectively.
We represent the memory-one strategies by a vector \cite{HWTN2014}
\begin{align}
\bm{q}_k &= \left( q_{\underline{x},x^{(\mathrm{C})},0}, \dots, q_{\underline{x},x^{(\mathrm{C})},N-2}, \ q_{\underline{x},x^{(\mathrm{N})},0}, \dots, q_{\underline{x},x^{(\mathrm{N})},N-2}, \right. \nonumber \\
& \qquad \left. q_{\overline{x},x^{(\mathrm{C})},0}, \dots, q_{\overline{x},x^{(\mathrm{C})},N-2}, \ q_{\overline{x},x^{(\mathrm{N})},0}, \dots, q_{\overline{x},x^{(\mathrm{N})},N-2} \right).
\label{eq:memory-one_opponents}
\end{align}
The entries $q_{x_j, x_k, i}$ denote the probability to take $x^{(\mathrm{C})}$ in the next round, given that the autocratic player previously played $x_j \in \{\underline{x}, \overline{x}\}$, player $k$ previously played $x_k \in \{x^{(\mathrm{C})}, x^{(\mathrm{N})}\}$, and $i$ of the other $N-2$ players took $x^{(\mathrm{C})}$. 
We randomly generate $200$ memory-one strategy profiles $\{ \bm{q}_k \}_{k\neq j}$ from a uniform distribution.
The procedure is summarized in Algorithm \ref{algo:random_m1}.
We again set $N=10$ and $T=10^6$.

\begin{algorithm}
\caption{Autocratic strategies against random memory-one opponents}
\label{algo:random_m1}
\begin{algorithmic}[1]
\REQUIRE Parameters of models $a, b, c, x_\mathrm{max}, N$, total time $T$
\REQUIRE Feasible parameters of two-point autocratic strategies $\chi, \kappa, W$
\ENSURE Time-averaged payoffs $S_l$ for all $l$
\STATE Set $\overline{x}_j$ and $\underline{x}_j$
\STATE Set $x^{(\mathrm{C})}$ and $x^{(\mathrm{N})}$
\FOR{$i_\mathrm{oppo}=1$ to $200$}
\STATE Initial condition: $x_j=\overline{x}_j$
\FOR{$k\neq j$}
\STATE Initialize each normal agent's strategy vector $\bm{q}_k \in [0,1]^{4(N-1)}$ $(k\neq j)$ randomly
\STATE Initial condition: $x_k=x^{(\mathrm{C})}$
\ENDFOR
\STATE Initialize time-averaged payoffs $S_l \gets 0$ for all $l$
\STATE Calculate $B$ from Eq. (\ref{eq:B_extortionate}) or (\ref{eq:B_negative})
\FOR{$t=1$ to $T$}
\STATE Update $x_j$ by transition probability (\ref{eq:transition_two-point}) and (\ref{eq:p_two-point})
\FOR{$k\neq j$}
\STATE Update $x_k$ by transition probability (\ref{eq:memory-one_opponents})
\ENDFOR
\STATE Calculate payoffs (\ref{eq:payoff_cournot}) for all players
\STATE Update total payoffs $S_l \leftarrow S_l + s_l$ for all $l$
\STATE Calculate $B$ from Eq. (\ref{eq:B_extortionate}) or (\ref{eq:B_negative})
\ENDFOR
\STATE Output time-averaged payoffs $S_l/T$ for all $l$
\ENDFOR
\end{algorithmic}
\end{algorithm}

Fig \ref{fig:linear_extortionate_vs_one} shows a relationship between the payoff of the autocratic player (horizontal) and the average payoffs of $N-1$ players (vertical).
We find that the autocratic strategy indeed unilaterally enforces a linear relationship between payoffs even against memory-one strategies.  
As shown in Fig \ref{fig:linear_extortionate_vs_zero} and Fig \ref{fig:linear_extortionate_vs_one}, this holds regardless of memory length (0 or 1) or how the opponents' strategies are defined, whether using fixed values with a certain separation or randomized memory-one rules.  
The two-point autocratic strategy consistently enforces the intended linear relation under all these conditions.
Furthermore, both $\mathcal{S}_j$ and $\sum_{k\neq j} \mathcal{S}_k/(N-1)$ are positive for this case.
This result suggests that a player may have incentive to adopt the fair autocratic strategy as far as the opponents use the two representative actions $x^{(\mathrm{C})}$ and $x^{(\mathrm{N})}$.

\begin{figure}[tbp]
\includegraphics[clip, width=8.0cm]{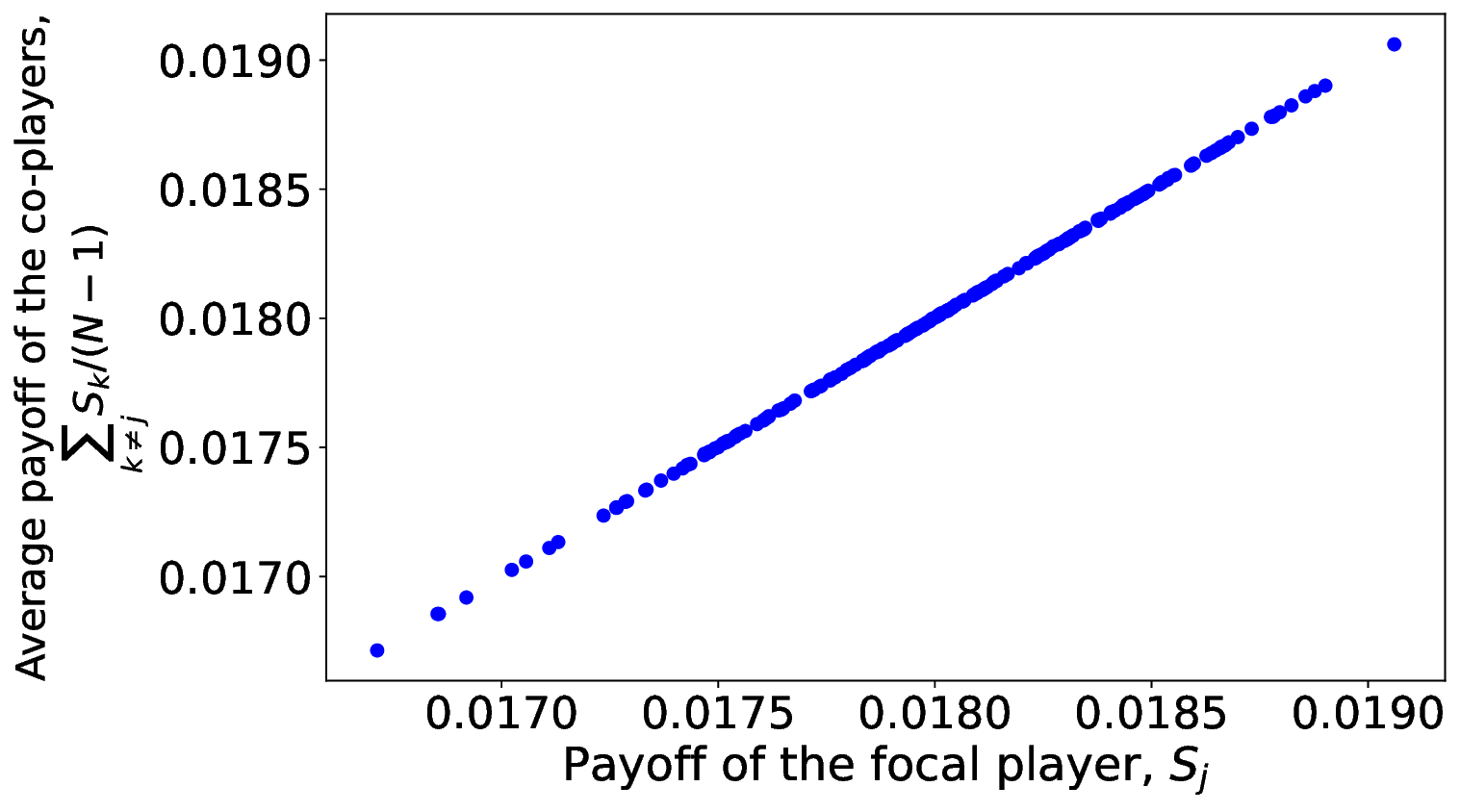}
\caption{A linear relation between $\mathcal{S}_j$ and $\sum_{k\neq j} \mathcal{S}_k/(N-1)$ when player $j$ uses a two-point autocratic strategy with $(\chi, \kappa)=(1, 0)$ against random memory-one strategies.}
\label{fig:linear_extortionate_vs_one}
\end{figure}

\subsection{Autocratic strategies against adaptive memory-one opponents}
Finally, we investigate the scenario in which each memory-one opponent independently attempts to maximize its own payoff against the other players.
We refer to such a player as an adaptive learning player.
In the repeated prisoner's dilemma game, it has been shown that the positively correlated ZD strategy can force an adaptive learning player to cooperate unconditionally~\cite{PreDys2012,CheZin2014,MMI2022}.
Here we again assume that memory-one strategies of the opponents take the form (\ref{eq:memory-one_opponents}), and the autocratic player adopts the fair autocratic strategy $(\chi, \kappa)=(1, 0)$.

To numerically implement this setting, we first initialize actions of all agents and the opponents' memory-one strategies with random values.
The opponents then adaptively update their memory-one strategies using a greedy method independently.
Specifically, at each round, an opponent independently selects one of its strategy parameters $q$ (corresponding to a given previous action profile), and perturbs it by either $+0.01$ or $-0.01$.
Both the original $q$ and the perturbed $\Tilde{q}$ are used to determine the action $x_k$ and $\Tilde{x}_k$, respectively.
Then, the payoff is calculated against the other players.
If the perturbation leads to a higher payoff, the change is accepted; otherwise, it is rejected.
This process is repeated iteratively, allowing each opponent to gradually improve its strategy against the other players.
The detailed algorithm is provided in Algorithm \ref{algo:adaptive_m1}.
We set $N=2$ for Fig \ref{fig:strategy_adaptive_n2} and $N=3$ for Fig \ref{fig:strategy_adaptive_n3}.
Although $T=10^8$ was used, plotting was terminated once no further payoff improvements were observed.

\begin{algorithm}
\caption{Autocratic strategies against adaptive memory-one opponents}
\label{algo:adaptive_m1}
\begin{algorithmic}[1]
\REQUIRE Parameters of models $a, b, c, x_\mathrm{max}, N$, total time $T$
\REQUIRE Feasible parameters of two-point autocratic strategies $\chi, \kappa, W$
\ENSURE Time-averaged payoffs $S_l$ for all $l$ and average strategies $\sum_{k\neq j} \bm{q}_k / (N - 1)$
\STATE Set $\overline{x}_j$ and $\underline{x}_j$
\STATE Set $x^{(\mathrm{C})}$ and $x^{(\mathrm{N})}$
\STATE Initial condition: $x_j \gets$ randomly choose $\overline{x}$ or $\underline{x}$
\FOR{$k\neq j$}
\STATE Initialize each normal agent's strategy vector $\bm{q}_k \in [0,1]^{4(N-1)}$ $(k\neq j)$ randomly
\STATE Initial condition: $x_k \gets$ randomly choose $x^{(\mathrm{C})}$ or $x^{(\mathrm{N})}$
\ENDFOR
\STATE Initialize time-averaged payoffs $S_l \gets 0$ for all $l$
\STATE Calculate $B$ from Eq. (\ref{eq:B_extortionate}) or (\ref{eq:B_negative})
\FOR{$t = 1$ to $T$}
    \FOR{$k \ne j$}
        \STATE Sample action $x_k$ with probability $q_{x_j^\prime, x_{k}^\prime, i^\prime}$ corresponding to previous action profile $\bm{x}^\prime$
        \STATE Create perturbed strategy $\Tilde{q}_{x_j^\prime, x_{k}^\prime, i^\prime} \gets \text{clip}(q_{x_j^\prime, x_{k}^\prime, i^\prime} + \delta, 0, 1)$ with $\delta = \pm 0.01$
        \STATE Sample alternative action $\Tilde{x}_k$ using $\Tilde{q}_{x_j^\prime, x_{k}^\prime, i^\prime}$
        \STATE  Calculate payoffs $s_k$ (from $x_k$) and $\Tilde{s}_k$ (from $\tilde{x}_k$)
        \IF{$\Tilde{s}_k > s_k$}
            \STATE Update $q_{x_j^\prime, x_{k}^\prime, i^\prime} \gets \Tilde{q}_{x_j^\prime, x_{k}^\prime, i^\prime}$, adopt $x_k \gets \Tilde{x}_k$
        \ELSE
            \STATE Keep $q_{x_j^\prime, x_{k}^\prime, i^\prime}$ and $x_k$
        \ENDIF
    \ENDFOR
    \STATE Update $x_j$ by transition probability (\ref{eq:transition_two-point}) and (\ref{eq:p_two-point})
    \STATE Calculate payoffs (\ref{eq:payoff_cournot}) for all players
    \STATE Update total payoffs $S_l \gets S_l + s_l$ for all $l$
    \STATE Calculate $B$ from Eq. (\ref{eq:B_extortionate}) or (\ref{eq:B_negative})
    \STATE Output time-averaged payoffs $S_l / t$ for all $l$ and average strategies $\sum_{k \neq j} \bm{q}_k / (N - 1)$
\ENDFOR
\end{algorithmic}
\end{algorithm}

Fig \ref{fig:strategy_adaptive_n2}a shows the time evolution of the strategy used by the adaptive memory-one player against a fixed autocratic strategy in the case of $N=2$, while Fig \ref{fig:strategy_adaptive_n2}b shows the time evolution of the time-averaged payoffs of both players.
From Fig \ref{fig:strategy_adaptive_n2}a, we observe that the memory-one strategy converges to $q_{\underline{x},x^{(\mathrm{C})},0} = 1$ and $q_{\overline{x},x^{(\mathrm{C})},0} = 1$, indicating that the adaptive memory-one player chooses $x^{(\mathrm{C})}$ with probability one in all subsequent rounds.
In addition, as time progresses, the adaptive player's time-averaged payoff increases and gradually approaches the theoretical expected payoff of $0.05405$ obtained when it consistently chooses $x^{(\mathrm{C})}$ (Fig \ref{fig:strategy_adaptive_n2}b; see Section A in \nameref{S1_Appendix} for derivation).
Furthermore, in order to check the validity of this numerical result, we also performed 100 numerical simulations, and find that 85 runs converge to the similar result as in Fig \ref{fig:strategy_adaptive_n2} (See Section B in \nameref{S1_Appendix}).
In the remaining 15 runs, $q_{\underline{x},x^{(\mathrm{C})},0}$ did not reach 1.
But this is most likely because the total number of time steps ($10^8$) was insufficient. 
In sum, due to the linear payoff relationship between the two players unilaterally enforced by the autocratic strategy, an adaptive memory-one player attempting to maximize its own payoff is inevitably driven to adopt the cooperative action $x^{(\mathrm{C})}$, similarly to the case of the prisoner's dilemma game \cite{PreDys2012,CheZin2014,MMI2022}.

\begin{figure}[tbp]
\includegraphics[clip, width=8.0cm]{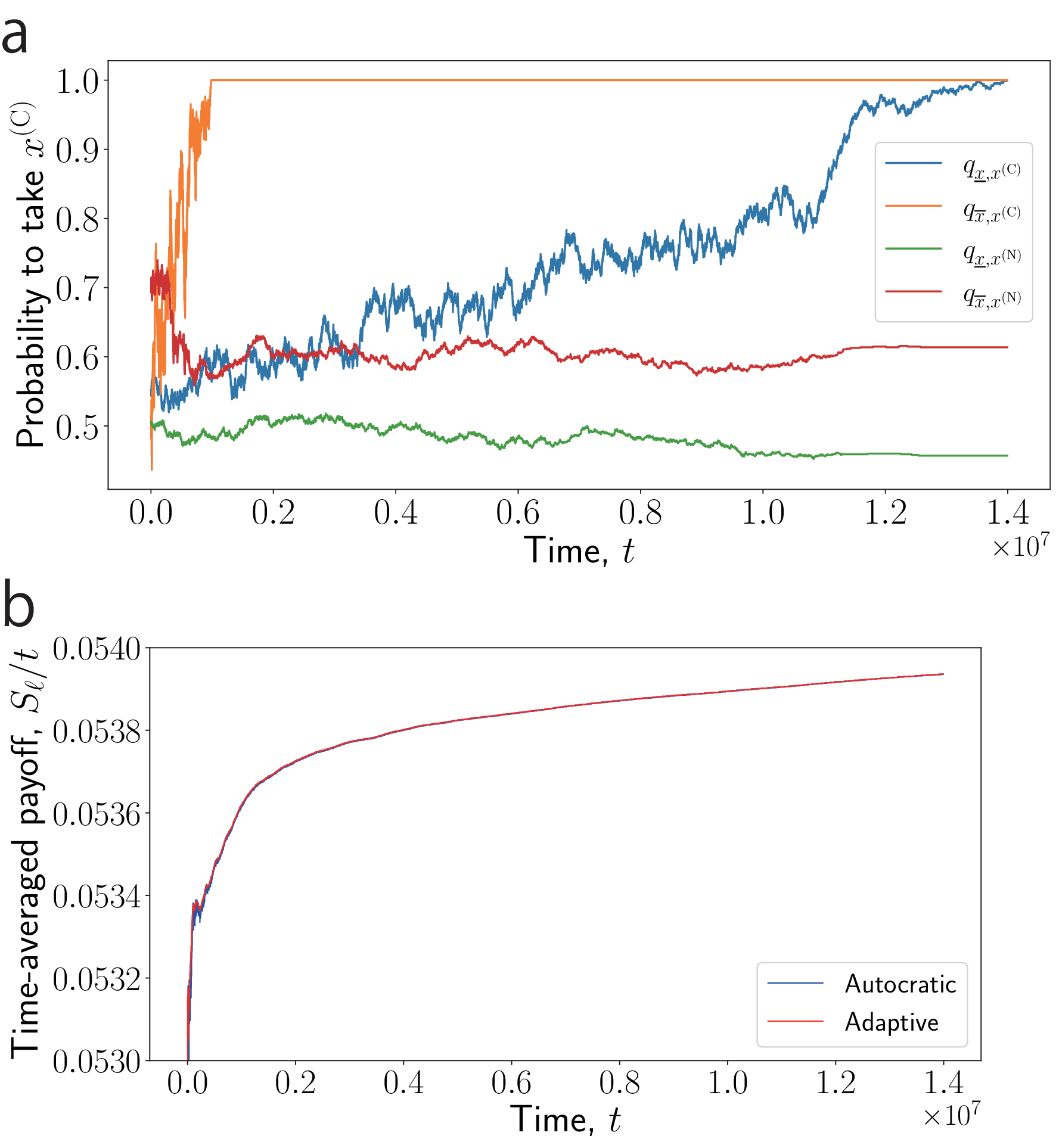}
\caption{Numerical results for in the duopoly case ($N=2$). (a) Strategy adaptation of an adaptive memory-one player against the fair autocratic strategy. (b) Time-averaged payoffs of a fixed autocratic player and an adaptive memory-one player. Data points are plotted every 1000 time steps.}
\label{fig:strategy_adaptive_n2}
\end{figure}

Fig \ref{fig:strategy_adaptive_n3}a shows the time evolution of the strategy used by the adaptive memory-one players against a fixed autocratic strategy in the case of $N=3$, while Fig \ref{fig:strategy_adaptive_n3}b shows the time evolution of the time-averaged payoffs of all players.
According to Fig \ref{fig:strategy_adaptive_n3}b, in the case of $N=3$, the payoffs of the two adaptive memory-one players do not continue to increase but rather converge around 0.04015.
Notably, this value is significantly lower than the theoretical expected payoff of 0.05039, which would be achieved if both adaptive players consistently chose $x^{(\mathrm{C})}$ (see Section A in \nameref{S1_Appendix}).
This result indicates that, in the case of $N=3$, the adaptive players do not evolve their strategies toward cooperation.
Indeed, as shown in Fig \ref{fig:strategy_adaptive_n3}a, the strategy converges to $\sum q_{\overline{x},x^{(\mathrm{C})},1}/2 = 0$, $\sum q_{\overline{x},x^{(\mathrm{N})},0}/2 = 0$, $\sum q_{\underline{x},x^{(\mathrm{C})},1}/2 = 1$, and $\sum q_{\underline{x},x^{(\mathrm{N})},0}/2 = 1$, suggesting that the two adaptive players evolve their strategies in a synchronized manner.
In other words, rather than attempting to maximize their payoffs against the fixed autocratic strategy, the adaptive random memory-one players appear to have evolved strategies that optimize their payoffs against each other.
Again, in order to check the validity of this numerical result, we also performed 100 numerical simulations, and find that all 100 runs converge to the similar result as in Fig \ref{fig:strategy_adaptive_n3} (See Section B in \nameref{S1_Appendix}).
Furthermore, we also tested a Grim/Trigger-like strategy, setting $\bm{q}_k = \left( q_{\underline{x},x^{(\mathrm{C})},0}, q_{\underline{x},x^{(\mathrm{C})},1},  q_{\underline{x},x^{(\mathrm{N})},0}, q_{\underline{x},x^{(\mathrm{N})},1}, q_{\overline{x},x^{(\mathrm{C})},0},  q_{\overline{x},x^{(\mathrm{C})},1}, q_{\overline{x},x^{(\mathrm{N})},0}, q_{\overline{x},x^{(\mathrm{N})},1} \right)=(0,0,0,0,0,1,0,0)$ for the two adaptive players.
We performed ten simulations, and in all cases, the adaptive players' strategies converged to the similar result as in Fig \ref{fig:strategy_adaptive_n3} (See Section B in \nameref{S1_Appendix}).
These results show that the result in Fig \ref{fig:strategy_adaptive_n3} does not depend on the initial conditions and realizations of random variables.

\begin{figure}[tbp]
\includegraphics[clip, width=8.0cm]{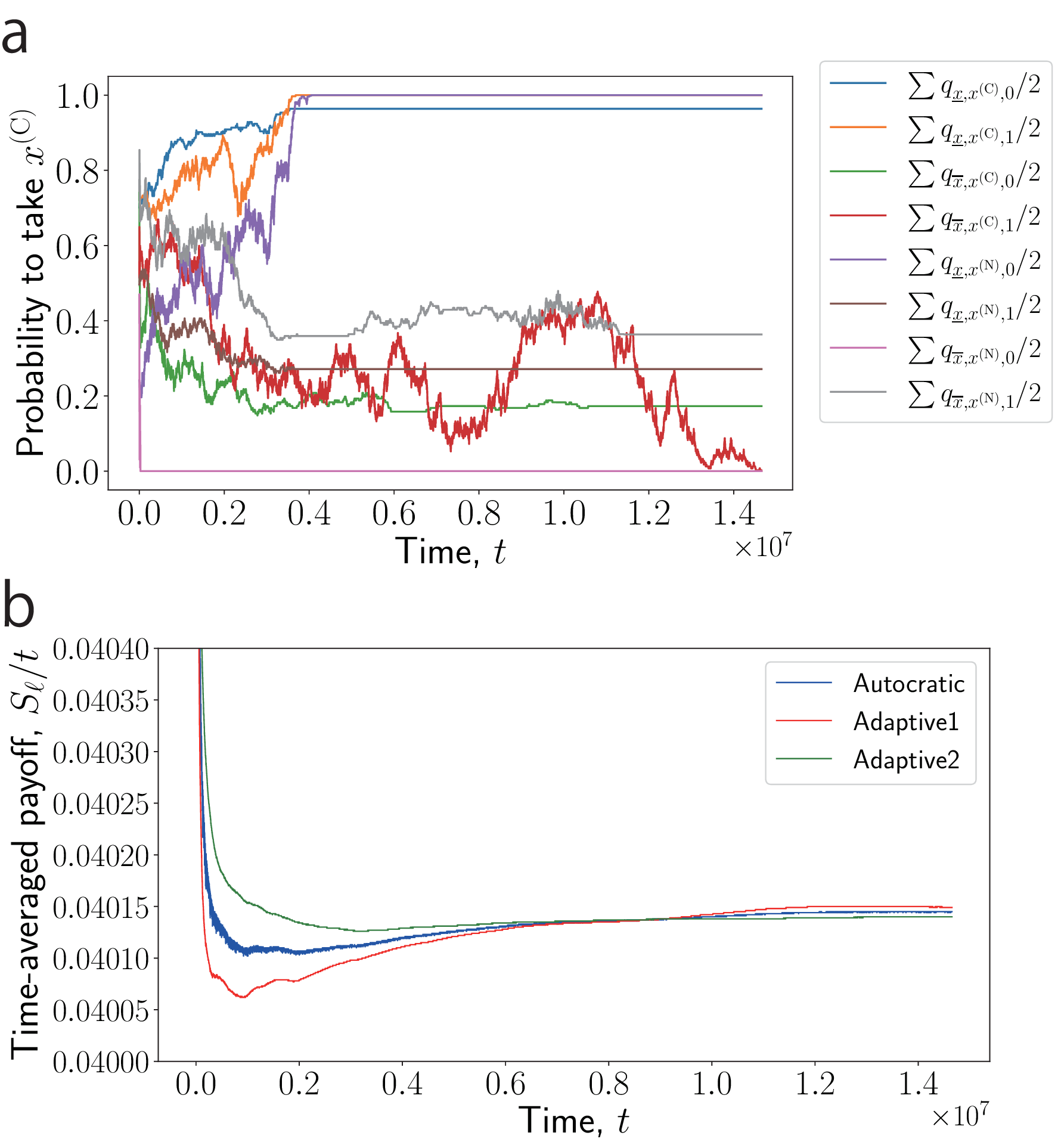}
\caption{Numerical results for in the triopoly case ($N=3$). (a) Strategy adaptation of two adaptive memory-one players against the fair autocratic strategy. (b) Time-averaged payoffs of a fixed autocratic player and two adaptive memory-one players. Data points are plotted every 1000 time steps.}
\label{fig:strategy_adaptive_n3}
\end{figure}

The above results show that the fair autocratic strategy can promote collusion of adaptively learning players in the duopoly ($N=2$) case, but cannot promote collusion in the triopoly ($N=3$) case.
The result in the duopoly case is similar to that in the repeated prisoner's dilemma game \cite{PreDys2012,CheZin2014,MMI2022}.
In the prisoner's dilemma game, the extortionate ZD strategy can force an adaptively learning player (without a theory of mind) to cooperate unconditionally.
This comes from a property of positively correlated ZD strategies: An increase in payoff of the opponent implies an increase in payoff of a player using the ZD strategies.
However, in the triopoly case, the fair autocratic strategy cannot force the adaptive opponents to collude.
This means that enforcing a linear relation (\ref{eq:linear_aus}) is not sufficient to directly control all other opponents; the fair autocratic strategy can control only the average payoff of the opponents.

The failure of control by the fair autocratic strategy in the triopoly case may come from the nature of two adaptive agents.
We provide the values of one-shot payoffs for $N=2$ and $N=3$ in Table \ref{tab:payoff_N2} and Table \ref{tab:payoff_N3}, respectively.
When two adaptive agents regard an autocratic agent as an environment \cite{HSCN2018}, they selfishly learn their best response $x^{(\mathrm{N})}$, since $x^{(\mathrm{N})}$ brings an agent larger one-shot payoff than $x^{(\mathrm{C})}$ as long as an autocratic agent takes $\overline{x}_j$.
(It should be noted that an autocratic agent frequently takes $\overline{x}_j$ and rarely takes $\underline{x}_j$ in our parameters.)
Concretely, in Table \ref{tab:payoff_N3}, when player $1$ is an autocratic agent, the payoff of player $2$ satisfies $s_2\left( \overline{x}, x^{(\mathrm{C})}, x^{(\mathrm{C})} \right)<s_2\left( \overline{x}, x^{(\mathrm{N})}, x^{(\mathrm{C})} \right)$ and $s_2\left( \overline{x}, x^{(\mathrm{C})}, x^{(\mathrm{N})} \right)=s_2\left( \overline{x}, x^{(\mathrm{N})}, x^{(\mathrm{N})} \right)$.
Therefore, they come to repeat $x^{(\mathrm{N})}$ for most rounds.
Only when the state of ``environment'' $x_j$ switches to $\underline{x}_j$, they synchronously take $x^{(\mathrm{C})}$, since $s_2\left( \underline{x}, x^{(\mathrm{C})}, x^{(\mathrm{C})} \right)>s_2\left( \underline{x}, x^{(\mathrm{N})}, x^{(\mathrm{C})} \right)$ and $s_2\left( \underline{x}, x^{(\mathrm{C})}, x^{(\mathrm{N})} \right)>s_2\left( \underline{x}, x^{(\mathrm{N})}, x^{(\mathrm{N})} \right)$ in Table \ref{tab:payoff_N3}.
This is as if they synchronously take $x^{(\mathrm{C})}$ in order to bring back the state of ``environment'' from $\underline{x}_j$ to $\overline{x}_j$, because transition between states of ``environment'' occurs with probability proportional to $\left| B(\bm{x}^\prime) \right|$.
In Section C in \nameref{S1_Appendix}, we provide a numerical result in a situation where only two adaptive agents exist.
Both adaptive agents finally learned to repeat the Nash equilibrium action $x^{(\mathrm{N})}$.
The failure of control in the triopoly case seems to come from the same reason as this situation.
It should be noted that a similar phenomenon occurs in the repeated public goods game \cite{LiHao2019}, where a cooperation-enforcing strategy is adopted against two independent agents using reinforcement learning.
The cooperation-enforcing strategy is known to be an unbeatable ZD strategy \cite{Ued2023}.
Robustness of our numerical result against other learning algorithms should be investigated in future.
In addition, we must remark that our numerical simulations are performed for only one set of payoff parameters $(a, b, c, x_\mathrm{max})$.
For other sets of parameters, the fair autocratic strategy may induce collusion even in the triopoly case.

\begin{table}[!ht]
\begin{adjustwidth}{-2.25in}{0in} 
\centering
\caption{
{\bf Payoffs for $N=2$.} Player $1$ is an autocratic agent, and player $2$ takes either $x^{(\mathrm{C})}$ or $x^{(\mathrm{N})}$.}
\begin{tabular}{|c+c|c|}
\hline
& $x^{(\mathrm{C})}$ & $x^{(\mathrm{N})}$ \\ \thickhline
$\overline{x}$ & $\frac{1}{8}\frac{(a-c)^2}{b}, \frac{1}{16}\frac{(a-c)^2}{b}$ & $\frac{1}{12}\frac{(a-c)^2}{b}, \frac{1}{18}\frac{(a-c)^2}{b}$ \\ \hline
$\underline{x}$ & $-cx_\mathrm{max}, -\frac{1}{4}\frac{c(a-c)}{b}$ & $-cx_\mathrm{max}, -\frac{1}{3}\frac{c(a-c)}{b}$  \\ \hline
\end{tabular}
\label{tab:payoff_N2}
\end{adjustwidth}
\end{table}

\begin{table}[!ht]
\begin{adjustwidth}{-2.25in}{0in} 
\centering
\caption{
{\bf Payoffs for $N=3$.} Player $1$ is an autocratic agent, and players $2$ and $3$ take either $x^{(\mathrm{C})}$ or $x^{(\mathrm{N})}$.}
\begin{tabular}{|c+c|c|c|c|}
\hline
& $\left( x^{(\mathrm{C})}, x^{(\mathrm{C})} \right)$ & $\left( x^{(\mathrm{C})}, x^{(\mathrm{N})} \right)$ & $\left( x^{(\mathrm{N})}, x^{(\mathrm{C})} \right)$ & $\left( x^{(\mathrm{N})}, x^{(\mathrm{N})} \right)$ \\ \thickhline
$\overline{x}$ & $\frac{1}{9}\frac{(a-c)^2}{b}, \frac{1}{18}\frac{(a-c)^2}{b}, \frac{1}{18}\frac{(a-c)^2}{b}$ & $\frac{1}{12}\frac{(a-c)^2}{b}, \frac{1}{24}\frac{(a-c)^2}{b}, \frac{1}{16}\frac{(a-c)^2}{b}$ & $\frac{1}{12}\frac{(a-c)^2}{b}, \frac{1}{16}\frac{(a-c)^2}{b}, \frac{1}{24}\frac{(a-c)^2}{b}$ & $\frac{1}{18}\frac{(a-c)^2}{b}, \frac{1}{24}\frac{(a-c)^2}{b}, \frac{1}{24}\frac{(a-c)^2}{b}$ \\ \hline
$\underline{x}$ & $-cx_\mathrm{max}, -\frac{1}{6}\frac{c(a-c)}{b}, -\frac{1}{6}\frac{c(a-c)}{b}$ & $-cx_\mathrm{max}, -\frac{1}{6}\frac{c(a-c)}{b}, -\frac{1}{4}\frac{c(a-c)}{b}$ & $-cx_\mathrm{max}, -\frac{1}{4}\frac{c(a-c)}{b}, -\frac{1}{6}\frac{c(a-c)}{b}$ & $-cx_\mathrm{max}, -\frac{1}{4}\frac{c(a-c)}{b}, -\frac{1}{4}\frac{c(a-c)}{b}$ \\ \hline
\end{tabular}
\label{tab:payoff_N3}
\end{adjustwidth}
\end{table}

Although the fair autocratic strategy cannot promote collusion of adaptive agents for $N=3$ (and probably for $N\geq 4$), our result for $N=2$ suggests another way to collusion.
As a special property of the Cournot oligopoly game, players $-j$ can be regarded as one effective player, as shown in Eq. (\ref{eq:smj}).
Therefore, when players $-j$ use a ZD alliance \cite{HWTN2014}, which is a ZD strategy implemented by a group of players, they can lead the opponent $j$ to collusion.
In Section D in \nameref{S1_Appendix}, we prove the existence of a fair ZD alliance in the Cournot oligopoly game.
We expect that the fair ZD alliance by $N-1$ players can promote collusion of the other player.

\section{Discussion}
\label{sec:discussion}
In this paper, we proved the existence of several autocratic strategies in the repeated Cournot oligopoly game, which unilaterally enforce linear relations between the payoff of the autocratic player and the average payoff of the opponents with both positive and negative slopes.
Particularly, we found that a fair autocratic strategy is useful because it can be used for enforcing the payoffs into positive values.
Furthermore, we numerically showed that a fair autocratic strategy can promote collusion in the duopoly case, although it cannot promote collusion in the triopoly case.

Autocratic strategies are useful because they unilaterally control payoffs irrespective of rationality of other players \cite{HCN2018,MMHet2025}.
This is quite different from previous results in classical game theory.
For example, in the folk theorem \cite{Gib1992}, it is assumed that players are rational (in other words, want to maximize their own payoffs), and how the equilibrium strategies are obtained has not been clear.
In contrast, by using positively correlated autocratic strategies, the autocratic player can force the opponents to optimize the payoff of the autocratic player \cite{PreDys2012}.
This result is not limited to the prisoner's dilemma game.
Indeed, in this paper, we found that such enforcement is possible even in a duopoly game.
Even if there are more firms, once the opponents recognize an autocratic party, they may quit optimizing their payoffs independently and start to collude.
Investigation on such multi-firm situations is a subject of future work.

Additionally, we want to emphasize that the results in this paper do not simply mimic those in the prisoner's dilemma game.
As noted in Section \ref{sec:model}, when action sets of all players are restricted to $\{ x^{(\mathrm{C})}, x^{(\mathrm{N})} \}$ and $N=2$, the Cournot oligopoly game is reduced to the prisoner's dilemma game, where $x^{(\mathrm{C})}$ and $x^{(\mathrm{N})}$ correspond to cooperation and defection, respectively (see Section C in \nameref{S1_Appendix}).
However, when action sets are $[0,x_\mathrm{max}]$, resulting autocratic strategies are quite different from those in the prisoner's dilemma game.
In the prisoner's dilemma case, the two-point equalizer strategies exist \cite{BNS1997,PreDys2012}, contrary to our Theorem \ref{thm:equalizer}.
(It should be noted that all ZD strategies in two-action games are two-point autocratic strategies.)
Similarly, although the autocratic strategies pinning its own payoff do not exist in the prisoner's dilemma game \cite{PreDys2012}, they exist in the Cournot oligopoly game (Theorem \ref{thm:self_pinning}).
Furthermore, whereas our Theorem \ref{thm:extortionate} claims that two-point positively correlated autocratic strategies exist only for $\chi\leq 1$, two-point positively correlated autocratic strategies exist for $\chi\geq 1$ for the prisoner's dilemma game \cite{PreDys2012,StePlo2013}.
These differences also exist even if we compare the Cournot oligopoly game with the public goods game \cite{HWTN2014,PHRT2015}, which is an $N$-player version of the prisoner's dilemma game.
These differences seem to come from the nature of the payoff function of the Cournot oligopoly game.

Obviously, these differences in equalizer strategies and positively correlated autocratic strategies may come from the fact that we only consider two-point autocratic strategies.
However, it is known that, when action sets of all players are finite sets, the existence condition of two-point zero-determinant strategies is equivalent to the existence condition of general zero-determinant strategies \cite{Ued2022b,Ued2025}.
We expect that this equivalence also holds when an action set of some player is an infinite set.
Specifying a necessary and sufficient condition for the existence of zero-determinant strategies in the case that action sets of some players are infinite sets is an important future problem.

\section{Methods}
\label{sec:methods}

\subsection{Proof of Proposition \ref{prop:existence_cournot}}
\label{subsec:existence_twopoint}
If a two-point autocratic strategy $T_j$ of player $j$ controlling $B$ exists, then it satisfies
\begin{align}
 \int_{x_j\in A_j^\prime} \psi(x_j) dT_j\left( x_j | \bm{x}^\prime \right) - \psi\left( x^\prime_j \right) &= B \left( \bm{x}^{\prime} \right) \quad \left( \forall x^\prime_j \in A_j^\prime, \forall x^\prime_{-j} \in A_{-j} \right)
 \label{eq:AS_B}
\end{align}
with some set $A_j^\prime=\left\{ x_j^{(1)}, x_j^{(2)} \right\}$ and some bounded function $\psi(\cdot)$.
Because $A_j^\prime$ consists of two actions and $\psi$ is bounded, there exist
\begin{align}
 \psi_{\mathrm{max}} := \max_{x_j\in A_j^\prime} \psi(x_j) \nonumber \\
 \psi_{\mathrm{min}} := \min_{x_j\in A_j^\prime} \psi(x_j) 
\end{align}
and
\begin{align}
 x_{j,\mathrm{max}} := \arg \max_{x_j\in A_j^\prime} \psi(x_j) \nonumber \\
 x_{j,\mathrm{min}} := \arg \min_{x_j\in A_j^\prime} \psi(x_j).
\end{align}
Due to the normalization condition of probability distribution, we can rewrite Eq. (\ref{eq:AS_B}) as
\begin{align}
  B \left( \bm{x}^{\prime} \right) &= \int_{x_j\in A_j^\prime} \left(\psi(x_j) - \psi_{\mathrm{max}} \right) dT_j\left( x_j | \bm{x}^\prime \right) - \left( \psi\left( x^\prime_j \right) - \psi_{\mathrm{max}} \right) \\
  &= \int_{x_j\in A_j^\prime} \left(\psi(x_j) - \psi_{\mathrm{min}} \right) dT_j\left( x_j | \bm{x}^\prime \right) - \left( \psi\left( x^\prime_j \right) - \psi_{\mathrm{min}} \right).
\end{align}
Then we find
\begin{align}
  B \left( x_{j,\mathrm{max}}, x^\prime_{-j} \right) &= \int_{x_j\in A_j^\prime} \left(\psi(x_j) - \psi_{\mathrm{max}} \right) dT_j\left( x_j | x_{j,\mathrm{max}}, x^\prime_{-j} \right) \leq 0 \quad (\forall x^\prime_{-j} \in A_{-j}) 
\end{align}
and
\begin{align}
  B \left( x_{j,\mathrm{min}}, x^\prime_{-j} \right) &= \int_{x_j\in A_j^\prime} \left(\psi(x_j) - \psi_{\mathrm{min}} \right) dT_j\left( x_j | x_{j,\mathrm{min}}, x^\prime_{-j} \right) \geq 0 \quad (\forall x^\prime_{-j} \in A_{-j}) 
\end{align}
because $T_j$ is probability.
Therefore, we can regard $x_{j,\mathrm{max}}$ and $x_{j,\mathrm{min}}$ as $\underline{x}_j$ and $\overline{x}_j$, respectively.

\subsection{Proof of Theorem \ref{thm:equalizer}}
\label{subsec:proof_equalizer}
According to the propositions above, two actions satisfying
\begin{align}
 \sum_{k\neq j} s_{k} \left( \underline{x}_j, x_{-j} \right) - r &\leq 0 \quad (\forall x_{-j} \in A_{-j}) \nonumber \\
 \sum_{k\neq j} s_{k} \left( \overline{x}_j, x_{-j} \right) - r &\geq 0 \quad (\forall x_{-j} \in A_{-j})
\end{align}
are necessary for the existence of two-point equalizer strategies of player $j$.
For $r>0$, when we choose $x_{-j}=0$, we find
\begin{align}
 \sum_{k\neq j} s_{k} \left( x_j, 0 \right) - r &=-r < 0
\end{align}
for any $x_j$.
Therefore, $\overline{x}_j$ does not exist, and the existence condition is never satisfied.
Similarly, for $r<0$, when we choose $x_{-j}=0$, we find
\begin{align}
 \sum_{k\neq j} s_{k} \left( x_j, 0 \right) - r &=-r > 0
\end{align}
for any $x_j$.
Therefore, $\underline{x}_j$ does not exist, and the existence condition is never satisfied.
For $r=0$, when we choose $x_{-j}=a/b$, we find
\begin{align}
 \sum_{k\neq j} s_{k} \left( x_j, \frac{a}{b} \right) &=-\frac{ca}{b} < 0
\end{align}
for any $x_j$.
Therefore, $\overline{x}_j$ does not exist, and the existence condition is never satisfied.

\subsection{Proof of Theorem \ref{thm:self_pinning}}
\label{subsec:proof_self_pinning}
According to the propositions above, two actions satisfying
\begin{align}
 s_{j} \left( \underline{x}_j, x_{-j} \right) - r &\leq 0 \quad (\forall x_{-j} \in A_{-j}) \nonumber \\
 s_{j} \left( \overline{x}_j, x_{-j} \right) - r &\geq 0 \quad (\forall x_{-j} \in A_{-j})
\end{align}
are necessary for the existence of the two-point autocratic strategies of player $j$.
When $r>0$, when we choose $x_{-j}=a/b$, we find
\begin{align}
 s_{k} \left( x_j, \frac{a}{b} \right) - r &= -cx_j - r < 0
\end{align}
for any $x_j$.
Therefore, $\overline{x}_j$ does not exist, and the existence condition is never satisfied.

We remark that $-cx_\mathrm{max}\leq s_{j} \left( \bm{x}^{\prime} \right) \leq (a-c)^2/(4b)$ for any $\bm{x}^{\prime}$.
Therefore, the two-point autocratic strategies with $r<-cx_\mathrm{max}$ do not exist.
When $-cx_\mathrm{max}\leq r \leq 0$, we find that 
\begin{align}
 s_{j} \left( 0, x_{-j} \right) - r &= \left| r \right| \geq 0 \quad (\forall x_{-j} \in A_{-j}) \nonumber \\
 s_{j} \left( x_\mathrm{max}, x_{-j} \right) - r &= -cx_\mathrm{max} + \left| r \right| \leq 0 \quad (\forall x_{-j} \in A_{-j}).
\end{align}
Therefore, according to Proposition \ref{prop:two-point}, we can construct two-point autocratic strategies pinning its own payoff by using $\overline{x}_j=0$ and $\underline{x}_j=x_\mathrm{max}$.

\subsection{Proof of Theorem \ref{thm:extortionate}}
\label{subsec:proof_extortionate}
According to the propositions above, two actions satisfying
\begin{align}
 B \left( \underline{x}_j, x_{-j} \right) &\leq 0 \quad (\forall x_{-j} \in A_{-j}) \nonumber \\
 B \left( \overline{x}_j, x_{-j} \right) &\geq 0 \quad (\forall x_{-j} \in A_{-j})
\end{align}
are necessary for the existence of two-point positively correlated autocratic strategies of player $j$.

We first rewrite the payoffs as
\begin{align}
 s_j \left( \bm{x} \right) &= \left[ \left(a-c-b\sum_{k=1}^N x_k \right) \theta\left( a-b\sum_{k=1}^N x_k \right) - c \theta\left( -\left( a-b\sum_{k=1}^N x_k \right) \right) \right] x_j.
\end{align}
Then, $B$ is rewritten as
\begin{align}
 & B \left( \bm{x} \right) \nonumber \\
 &= \left[ \left(a-c-b\sum_{l=1}^N x_l \right) \theta\left( a-b\sum_{l=1}^N x_l \right) - c \theta\left( -\left( a-b\sum_{l=1}^N x_l \right) \right) \right] \left[ x_j - \frac{\chi}{N-1} \sum_{k\neq j} x_k \right] \nonumber \\
 & \quad + (\chi-1)\kappa.
 \label{eq:B_extortionate_mod}
\end{align}
We find that
\begin{align}
 & B \left( \frac{\chi}{\chi+N-1} \frac{a-c}{b}, x_{-j} \right) \nonumber \\
 & = \frac{\chi}{N-1} b \left( \frac{N-1}{\chi+N-1}\frac{a-c}{b} - \sum_{k\neq j} x_k \right)^2 \theta\left( c + \frac{N-1}{\chi+N-1}(a-c) - b\sum_{k\neq j} x_k \right) \nonumber \\
 & \quad - c \frac{\chi}{N-1} \left( \frac{N-1}{\chi+N-1} \frac{a-c}{b} - \sum_{k\neq j} x_k \right) \theta\left( - \left( c + \frac{N-1}{\chi+N-1}(a-c)  - b\sum_{k\neq j} x_k \right) \right) \nonumber \\
 & \quad + (\chi-1)\kappa.
\end{align}
The first term on the right-hand side is always non-negative.
It should be noted that, when $c + (N-1)(a-c)/(\chi+N-1)- b\sum_{k\neq j} x_k<0$, the second term satisfies
\begin{align}
 - c \frac{\chi}{N-1} \left( \frac{N-1}{\chi+N-1} \frac{a-c}{b} - \sum_{k\neq j} x_k \right) &> c \frac{\chi}{N-1} \frac{c}{b} >0.
\end{align}
Therefore, we obtain
\begin{align}
 B \left( \frac{\chi}{\chi+N-1} \frac{a-c}{b}, x_{-j} \right) &\geq (\chi-1)\kappa.
 \label{eq:B_extortion_over}
\end{align}
Furthermore, 
\begin{align}
 B \left( x_\mathrm{max}, x_{-j} \right) &= -c \left[ x_\mathrm{max} - \frac{\chi}{N-1} \sum_{k\neq j} x_k \right] + (\chi-1)\kappa.
 \label{eq:B_extortion_under}
\end{align}

We also calculate
\begin{align}
 & B \left( x_j, \frac{N-1}{\chi+N-1} \frac{a-c}{b} \right) \nonumber \\
 &= - b \left( x_j - \frac{\chi}{\chi+N-1}\frac{a-c}{b} \right)^2 \theta\left( c + \frac{\chi}{\chi+N-1}(a-c) - bx_j \right) \nonumber \\
 & \quad - c \left( x_j - \frac{\chi}{\chi+N-1} \frac{a-c}{b} \right) \theta\left( - \left( c + \frac{\chi}{\chi+N-1}(a-c)  - bx_j \right) \right) \nonumber \\
 & \quad + (\chi-1)\kappa.
\end{align}
The first term on the right-hand side is always non-positive.
It should be noted that, when $c + \chi(a-c)/(\chi+N-1)- bx_j<0$, the second term satisfies
\begin{align}
  -c \left( x_j - \frac{\chi}{\chi+N-1} \frac{a-c}{b} \right) &< - \frac{c^2}{b} <0.
\end{align}
Therefore, we obtain
\begin{align}
  B \left( x_j, \frac{N-1}{\chi+N-1} \frac{a-c}{b} \right) &\leq (\chi-1)\kappa.
  \label{eq:B_extortion_no}
\end{align}

We first consider the case $0<\chi <1$.
When $\kappa \leq 0$, according to Eq. (\ref{eq:B_extortion_over}), we find
\begin{align}
 B \left( \frac{\chi}{\chi+N-1} \frac{a-c}{b}, x_{-j} \right) &\geq 0.
\end{align}
Furthermore, by using $0\leq \sum_{k\neq j} x_k \leq (N-1) x_\mathrm{max}$ in Eq. (\ref{eq:B_extortion_under}), we obtain
\begin{align}
 B \left( x_\mathrm{max}, x_{-j} \right) &\leq (1-\chi) \left[ \left| \kappa \right| - cx_\mathrm{max} \right] \leq 0,
\end{align}
where we have used $-cx_\mathrm{max}\leq \kappa \leq (a-c)^2/(4b)$.
Therefore, according to Proposition \ref{prop:two-point} with $\overline{x}_j=\chi/(\chi+N-1) \cdot (a-c)/b$ and $\underline{x}_j=x_\mathrm{max}$, two-point positively correlated autocratic strategies exist for $0<\chi <1$ and $\kappa \leq 0$.
When $\kappa>0$, according to Eq. (\ref{eq:B_extortion_no}), we find
\begin{align}
  B \left( x_j, \frac{N-1}{\chi+N-1} \frac{a-c}{b} \right) &\leq (\chi-1)\kappa < 0
\end{align}
for any $x_j$.
Therefore, $\overline{x}_j$ does not exist, and the existence condition is never satisfied.

Second, we consider the case $\chi=1$.
For this case, $B$ does not depend on $\kappa$.
According to Eqs. (\ref{eq:B_extortion_over}) and (\ref{eq:B_extortion_under}), we obtain
\begin{align}
 B \left( \frac{\chi}{\chi+N-1} \frac{a-c}{b}, x_{-j} \right) &\geq 0 \label{eq:xov_extortionate_chi1} \\
 B \left( x_\mathrm{max}, x_{-j} \right) = -c \left[ x_\mathrm{max} - \frac{1}{N-1} \sum_{k\neq j} x_k \right]  &\leq 0.
\end{align}
Therefore, according to Proposition \ref{prop:two-point} with $\overline{x}_j=\chi/(\chi+N-1) \cdot (a-c)/b$ and $\underline{x}_j=x_\mathrm{max}$, two-point positively correlated autocratic strategies exist for $\chi=1$.

Finally, we consider the case $\chi>1$.
When $\kappa \geq 0$, from Eq. (\ref{eq:B_extortionate_mod}), we find
\begin{align}
 B \left( x_j, (N-1) x_\mathrm{max} \right)  &= -c\left[ x_j - \chi x_\mathrm{max}  \right] + (\chi-1)\kappa > 0
\end{align}
for any $x_j$.
Therefore, $\underline{x}_j$ does not exist, and the existence condition is never satisfied.
When $\kappa < 0$, according to Eq. (\ref{eq:B_extortion_no}), we obtain
\begin{align}
  B \left( x_j, \frac{N-1}{\chi+N-1} \frac{a-c}{b} \right) &\leq (\chi-1)\kappa < 0
\end{align}
for any $x_j$.
Therefore, $\overline{x}_j$ does not exist, and the existence condition is never satisfied.

\subsection{Proof of Theorem \ref{thm:negative}}
\label{subsec:proof_negative}
According to the propositions above, two actions satisfying
\begin{align}
 B \left( \underline{x}_j, x_{-j} \right) &\leq 0 \quad (\forall x_{-j} \in A_{-j}) \nonumber \\
 B \left( \overline{x}_j, x_{-j} \right) &\geq 0 \quad (\forall x_{-j} \in A_{-j})
\end{align}
are necessary for the existence of two-point negatively correlated autocratic strategies of player $j$.
We remark that $B$ can be explicitly written as
\begin{align}
 B \left( \bm{x} \right) &= \left[ \left(a-b\sum_{l=1}^N x_l \right) \theta\left( a-b\sum_{l=1}^N x_l \right) - c \right] \left[ x_j + \frac{\left| \chi \right|}{N-1} \sum_{k\neq j} x_k \right] - \left( \left| \chi \right|+1 \right)\kappa.
 \label{eq:B_negative_mod}
\end{align}

When $\kappa > - \left| \chi \right| cx_\mathrm{max}/\left( 1 + \left| \chi \right| \right)$ with any $\chi<0$, we find
\begin{align}
 B \left( x_j, (N-1)x_\mathrm{max} \right) &= -c \left[ x_j + \left| \chi \right| x_\mathrm{max} \right] - \left( \left| \chi \right| +1 \right) \kappa \nonumber \\
 &< -c \left[ x_j + \left| \chi \right| x_\mathrm{max} \right] + \left| \chi \right| cx_\mathrm{max} \nonumber \\
 &= - c x_j \nonumber \\
 &\leq 0
\end{align}
for all $x_j$.
Therefore, $\overline{x}_j$ does not exist.

When $\left| \chi \right| > 1$ and $-cx_\mathrm{max} \leq \kappa \leq - \left| \chi \right| cx_\mathrm{max}/\left( 1 + \left| \chi \right| \right)$, we find
\begin{align}
 B \left( x_j, 0 \right) &= \left[ (a-bx_j) \theta(a-bx_j) - c \right] x_j + \left( \left| \chi \right| +1 \right) \left| \kappa \right| \nonumber \\
 &\geq -cx_\mathrm{max} + \left( \left| \chi \right| +1 \right) \left| \kappa \right| \nonumber \\
 &\geq -cx_\mathrm{max} +  \left| \chi \right| cx_\mathrm{max} \nonumber \\
 &> 0
\end{align}
for all $x_j$.
Therefore, $\underline{x}_j$ does not exist.

When $\left| \chi \right| \leq 1$ and $-cx_\mathrm{max}/\left( 1 + \left| \chi \right| \right) \leq \kappa \leq - \left| \chi \right| cx_\mathrm{max}/\left( 1 + \left| \chi \right| \right)$, we find
\begin{align}
 B \left( 0, x_{-j} \right) &= \left[ \left(a-b\sum_{k\neq j} x_k \right) \theta\left( a-b\sum_{k \neq j} x_k \right) - c \right] \frac{\left| \chi \right|}{N-1} \sum_{k\neq j} x_k + \left( \left| \chi \right| +1 \right) \left| \kappa \right| \nonumber \\
 &\geq - \left| \chi \right| cx_\mathrm{max} + \left( \left| \chi \right| +1 \right) \left| \kappa \right| \nonumber \\
 &\geq 0.
\end{align}
Therefore, we can use $x_j=0$ as $\overline{x}_j$.
Furthermore, we also find
\begin{align}
 B \left( x_\mathrm{max}, x_{-j} \right) &= -cx_\mathrm{max} -c \frac{\left| \chi \right|}{N-1} \sum_{k\neq j} x_k + \left( \left| \chi \right| +1 \right) \left| \kappa \right| \nonumber \\
 &\leq -cx_\mathrm{max} + \left( \left| \chi \right| +1 \right) \left| \kappa \right| \nonumber \\
 &\leq 0.
\end{align}
Therefore, we can use $x_j=x_\mathrm{max}$ as $\underline{x}_j$.
According to Proposition \ref{prop:two-point}, two-point negatively correlated autocratic strategies exist for this parameter region.

Finally, when $\left| \chi \right| \leq 1$ and $-cx_\mathrm{max} \leq \kappa < - cx_\mathrm{max}/\left( 1 + \left| \chi \right| \right)$, we find
\begin{align}
 B \left( x_j, 0 \right) &= \left[ (a-bx_j) \theta(a-bx_j) - c \right] x_j + \left( \left| \chi \right| +1 \right) \left| \kappa \right| \nonumber \\
 &\geq -cx_\mathrm{max} + \left( \left| \chi \right| +1 \right) \left| \kappa \right| \nonumber \\
 &> -cx_\mathrm{max} + cx_\mathrm{max} \nonumber \\
 &= 0
\end{align}
for all $x_j$.
Therefore, $\underline{x}_j$ does not exist.

\section*{Supporting information}
\paragraph*{S1 Appendix.}
\label{S1_Appendix}
{\bf Appendix.}
This appendix contains four sections, ``Calculation of expected payoffs when the opponents use $x^{(\mathrm{C})}$'', ``Additional numerical results'', ``Numerical results for only two adaptive agents'', and ``Fair zero-determinant alliance''.

\nolinenumbers


%
%
%

\begin{thebibliography}{10}

\bibitem{FudTir1991}
Fudenberg D, Tirole J.
\newblock Game Theory.
\newblock Massachusetts: MIT Press; 1991.

\bibitem{OsbRub1994}
Osborne MJ, Rubinstein A.
\newblock A Course in Game Theory.
\newblock Massachusetts: MIT Press; 1994.

\bibitem{Gib1992}
Gibbons R.
\newblock Game Theory for Applied Economists.
\newblock Princeton University Press; 1992.

\bibitem{Now2006}
Nowak MA.
\newblock Five rules for the evolution of cooperation.
\newblock Science. 2006;314(5805):1560--1563.

\bibitem{PreDys2012}
Press WH, Dyson FJ.
\newblock Iterated Prisoner{\textquoteright}s Dilemma contains strategies that
  dominate any evolutionary opponent.
\newblock Proceedings of the National Academy of Sciences.
  2012;109(26):10409--10413.

\bibitem{BNS1997}
Boerlijst MC, Nowak MA, Sigmund K.
\newblock Equal pay for all prisoners.
\newblock The American Mathematical Monthly. 1997;104(4):303--305.

\bibitem{StePlo2013}
Stewart AJ, Plotkin JB.
\newblock From extortion to generosity, evolution in the Iterated
  Prisoner{\textquoteright}s Dilemma.
\newblock Proceedings of the National Academy of Sciences.
  2013;110(38):15348--15353.

\bibitem{Aki2016}
Akin E.
\newblock The iterated prisoner's dilemma: good strategies and their dynamics.
\newblock Ergodic Theory, Advances in Dynamical Systems. 2016; p. 77--107.

\bibitem{HWTN2014}
Hilbe C, Wu B, Traulsen A, Nowak MA.
\newblock Cooperation and control in multiplayer social dilemmas.
\newblock Proceedings of the National Academy of Sciences.
  2014;111(46):16425--16430.

\bibitem{PHRT2015}
Pan L, Hao D, Rong Z, Zhou T.
\newblock Zero-determinant strategies in iterated public goods game.
\newblock Scientific Reports. 2015;5:13096.

\bibitem{McAHau2016}
McAvoy A, Hauert C.
\newblock Autocratic strategies for iterated games with arbitrary action
  spaces.
\newblock Proceedings of the National Academy of Sciences.
  2016;113(13):3573--3578.

\bibitem{TahGho2020}
Taha MA, Ghoneim A.
\newblock Zero-determinant strategies in repeated asymmetric games.
\newblock Applied Mathematics and Computation. 2020;369:124862.

\bibitem{KTZ2024}
Kang K, Tian J, Zhang B.
\newblock Cooperation and control in asymmetric repeated games.
\newblock Applied Mathematics and Computation. 2024;470:128589.

\bibitem{Ued2022}
Ueda M.
\newblock Unbeatable Tit-for-Tat as a Zero-Determinant Strategy.
\newblock Journal of the Physical Society of Japan. 2022;91(5):054804.

\bibitem{Ued2025}
Ueda M.
\newblock On the implementation of zero-determinant strategies in repeated
  games.
\newblock Applied Mathematics and Computation. 2025;489:129179.

\bibitem{Ued2022b}
Ueda M.
\newblock Necessary and Sufficient Condition for the Existence of
  Zero-Determinant Strategies in Repeated Games.
\newblock Journal of the Physical Society of Japan. 2022;91(8):084801.

\bibitem{Ued2023}
Ueda M.
\newblock Unexploitable games and unbeatable strategies.
\newblock IEEE Access. 2023;11:5062--5068.

\bibitem{HTS2015}
Hilbe C, Traulsen A, Sigmund K.
\newblock Partners or rivals? Strategies for the iterated prisoner's dilemma.
\newblock Games and Economic Behavior. 2015;92:41--52.

\bibitem{IchMas2018}
Ichinose G, Masuda N.
\newblock Zero-determinant strategies in finitely repeated games.
\newblock Journal of Theoretical Biology. 2018;438:61--77.

\bibitem{GovCao2020}
Govaert A, Cao M.
\newblock Zero-determinant strategies in repeated multiplayer social dilemmas
  with discounted payoffs.
\newblock IEEE Transactions on Automatic Control. 2020;66(10):4575--4588.

\bibitem{HRZ2015}
Hao D, Rong Z, Zhou T.
\newblock Extortion under uncertainty: Zero-determinant strategies in noisy
  games.
\newblock Phys Rev E. 2015;91:052803.

\bibitem{MamIch2020}
Mamiya A, Ichinose G.
\newblock Zero-determinant strategies under observation errors in repeated
  games.
\newblock Phys Rev E. 2020;102:032115.

\bibitem{MMI2021}
Mamiya A, Miyagawa D, Ichinose G.
\newblock Conditions for the existence of zero-determinant strategies under
  observation errors in repeated games.
\newblock Journal of Theoretical Biology. 2021;526:110810.

\bibitem{McAHau2017}
McAvoy A, Hauert C.
\newblock Autocratic strategies for alternating games.
\newblock Theoretical Population Biology. 2017;113:13--22.

\bibitem{DRWW2021}
Deng C, Rong Z, Wang L, Wang X.
\newblock Modeling replicator dynamics in stochastic games using Markov chain
  method.
\newblock In: Proceedings of the 20th International Conference on Autonomous
  Agents and Multiagent Systems; 2021. p. 420--428.

\bibitem{LiuWu2022}
Liu F, Wu B.
\newblock Environmental quality and population welfare in Markovian
  eco-evolutionary dynamics.
\newblock Applied Mathematics and Computation. 2022;431:127309.

\bibitem{HNS2013}
Hilbe C, Nowak MA, Sigmund K.
\newblock Evolution of extortion in Iterated Prisoner{\textquoteright}s Dilemma
  games.
\newblock Proceedings of the National Academy of Sciences.
  2013;110(17):6913--6918.

\bibitem{AdaHin2013}
Adami C, Hintze A.
\newblock Evolutionary instability of zero-determinant strategies demonstrates
  that winning is not everything.
\newblock Nature Communications. 2013;4(1):2193.

\bibitem{HNT2013}
Hilbe C, Nowak MA, Traulsen A.
\newblock Adaptive dynamics of extortion and compliance.
\newblock PLOS ONE. 2013;8(11):e77886.

\bibitem{SzoPer2014}
Szolnoki A, Perc M.
\newblock Evolution of extortion in structured populations.
\newblock Physical Review E. 2014;89(2):022804.

\bibitem{HWTN2015}
Hilbe C, Wu B, Traulsen A, Nowak MA.
\newblock Evolutionary performance of zero-determinant strategies in
  multiplayer games.
\newblock Journal of Theoretical Biology. 2015;374:115--124.

\bibitem{CWF2022}
Chen X, Wang L, Fu F.
\newblock The intricate geometry of zero-determinant strategies underlying
  evolutionary adaptation from extortion to generosity.
\newblock New Journal of Physics. 2022;24(10):103001.

\bibitem{CheFu2023}
Chen X, Fu F.
\newblock Outlearning extortioners: unbending strategies can foster reciprocal
  fairness and cooperation.
\newblock PNAS Nexus. 2023;2(6):pgad176.

\bibitem{BisNai2000}
Bischi GI, Naimzada A.
\newblock Global analysis of a dynamic duopoly game with bounded rationality.
\newblock In: Advances in Dynamic Games and Applications. Springer; 2000. p.
  361--385.

\bibitem{CheZin2014}
Chen J, Zinger A.
\newblock The robustness of zero-determinant strategies in iterated prisoner's
  dilemma games.
\newblock Journal of Theoretical Biology. 2014;357:46--54.

\bibitem{MMI2022}
Miyagawa D, Mamiya A, Ichinose G.
\newblock Adapting paths against zero-determinant strategies in repeated
  prisoner's dilemma games.
\newblock Journal of Theoretical Biology. 2022;549:111211.

\bibitem{MMHet2025}
McAvoy A, Madhushani~Sehwag U, Hilbe C, Chatterjee K, Barfuss W, Su Q, et~al.
\newblock Unilateral incentive alignment in two-agent stochastic games.
\newblock Proceedings of the National Academy of Sciences.
  2025;122(25):e2319927121.

\bibitem{DOS2012b}
Duersch P, Oechssler J, Schipper BC.
\newblock Unbeatable imitation.
\newblock Games and Economic Behavior. 2012;76(1):88--96.

\bibitem{DOS2014}
Duersch P, Oechssler J, Schipper BC.
\newblock When is tit-for-tat unbeatable?
\newblock International Journal of Game Theory. 2014;43(1):25--36.

\bibitem{Veg1997}
Vega-Redondo F.
\newblock The evolution of Walrasian behavior.
\newblock Econometrica: Journal of the Econometric Society.
  1997;65(2):375--384.

\bibitem{Ued2021b}
Ueda M.
\newblock Memory-two zero-determinant strategies in repeated games.
\newblock Royal Society Open Science. 2021;8(5):202186.

\bibitem{HSCN2018}
Hilbe C, {\v{S}}imsa {\v{S}}, Chatterjee K, Nowak MA.
\newblock Evolution of cooperation in stochastic games.
\newblock Nature. 2018;559(7713):246--249.

\bibitem{LiHao2019}
Li K, Hao D.
\newblock Cooperation enforcement and collusion resistance in repeated public
  goods games.
\newblock In: Proceedings of the AAAI Conference on Artificial Intelligence.
  vol.~33; 2019. p. 2085--2092.

\bibitem{HCN2018}
Hilbe C, Chatterjee K, Nowak MA.
\newblock Partners and rivals in direct reciprocity.
\newblock Nature Human Behaviour. 2018;2(7):469.

\end{thebibliography}

\end{document}